\documentclass{article}

\usepackage{arxiv}

\usepackage[utf8]{inputenc} % allow utf-8 input
\usepackage[T1]{fontenc}    % use 8-bit T1 fonts
\usepackage{hyperref}       % hyperlinks
\usepackage{url}            % simple URL typesetting
\usepackage{booktabs}       % professional-quality tables
\usepackage{amsfonts}       % blackboard math symbols
\usepackage{nicefrac}       % compact symbols for 1/2, etc.
\usepackage{microtype}      % microtypography
\usepackage{lipsum}

\usepackage{cite}
\usepackage{amsmath,amssymb,amsfonts}
\usepackage{graphicx}
\usepackage{algorithm}
\usepackage[noend]{algpseudocode}
\usepackage{hyperref}
\usepackage{textcomp}
\usepackage{threeparttablex} % for table footnotes

\usepackage{mathtools}
\usepackage{caption}
\usepackage{float}
\usepackage{stmaryrd}
\usepackage{epstopdf}
\usepackage{multirow}
\usepackage{pifont}
\usepackage{caption}
\usepackage{subcaption}

\newcommand{\norm}[1]{\left\lVert#1\right\rVert}
\newcommand{\R}{{\mathbb{R}}}

\newcommand{\y}{{y}}

\newcommand{\N}{{\mathbb{N}}}

\newcommand{\X}{{\mathbf{X}}}
\newcommand{\W}{{\mathbf{W}}}
\newcommand{\Y}{{\mathbf{Y}}}
\newcommand{\T}{{\mathbf{T}}}
\newcommand{\So}{{\mathbf{S}}}
\newcommand{\Obs}{{\mathcal{U}}}
\newcommand{\con}{{\mathcal{o}}}
\newcommand{\U}{{\mathbf{U}}}

\newcommand{\cmark}{\ding{51}}%
\newcommand{\xmark}{\ding{55}}%

\newtheorem{theorem}{Theorem}[section]

\newtheorem{assumption}{Assumption}

\newtheorem{definition}[theorem]{Definition}
\newtheorem{lemma}[theorem]{Lemma}
\newtheorem{remark}[theorem]{Remark}
\newtheorem{problem}[theorem]{Problem}
\newtheorem{proof}[theorem]{Proof}

% \def\BibTeX{{\rm B\kern-.05em{\sc i\kern-.025em b}\kern-.08em
%     T\kern-.1667em\lower.7ex\hbox{E}\kern-.125emX}}
% \markboth{\hskip25pc IEEE TRANSACTIONS AND JOURNALS TEMPLATE}
% {Das, Basu, and Jagtap: Spatiotemporal Tubes for Temporal Reach-Avoid-Stay Tasks in Unknown Systems}

\title{Spatiotemporal Tubes for Temporal Reach-Avoid-Stay Tasks in Unknown Systems
%%%% Update your official citation here when published 
% \thanks{\textit{\underline{Citation}}: 
% \textbf{Authors. Title. Pages.... DOI:000000/11111.}} 
\thanks{ This work was supported in part by the SERB Start-Up Research Grant; in part by the ARTPARK. The work of Ratnangshu Das was supported by the Prime Minister’s Research Fellowship from the Ministry of Education, Government of India.}
}

\author{
 Ratnangshu Das\thanks{Authors contributed equally.} \\
  Robert Bosch Centre for Cyber-Physical Systems\\
  IISc, Bengaluru, India\\
  \texttt{ratnangshud@iisc.ac.in} \\
  %% examples of more authors
   \And
 Ahan Basu$^\dag$ \\
  Robert Bosch Centre for Cyber-Physical Systems\\
  IISc, Bengaluru, India\\
  \texttt{ahanbasu@iisc.ac.in} \\
  \And
 Pushpak Jagtap \\
  Robert Bosch Centre for Cyber-Physical Systems\\
  IISc, Bengaluru, India\\
  \texttt{pushpak@iisc.ac.in} \\
% \thanks{R. Das, A. Basu and P. Jagtap are with Robert Bosch Centre for Cyber-Physical Systems, IISc, Bangalore, India {\tt\small \{ratnangshud,ahanbasu,pushpak\}@iisc.ac.in}}
}

\begin{document}
\maketitle

\begin{abstract}
The paper considers the controller synthesis problem for general MIMO systems with unknown dynamics, aiming to fulfill the temporal reach-avoid-stay task, where the unsafe regions are time-dependent, and the target must be reached within a specified time frame. The primary aim of the paper is to construct the spatiotemporal tube (STT) using a sampling-based approach and thereby devise a closed-form approximation-free control strategy to ensure that system trajectory reaches the target set while avoiding time-dependent unsafe sets. The proposed scheme utilizes a novel method involving STTs to provide controllers that guarantee both system safety and reachability. In our sampling-based framework, we translate the requirements of STTs into a Robust optimization program (ROP). To address the infeasibility of ROP caused by infinite constraints, we utilize the sampling-based Scenario optimization program (SOP). Subsequently, we solve the SOP to generate the tube and closed-form controller for an unknown system, ensuring the temporal reach-avoid-stay specification. Finally, the effectiveness of the proposed approach is demonstrated through three case studies: an omnidirectional robot, a SCARA manipulator, and a magnetic levitation system.
\end{abstract}

% \begin{IEEEkeywords}
% Closed-form Control, Euler-Lagrange Systems, Prescribed-time Performance, Reach-Avoid-Stay Specifications, Spatiotemporal Tubes.
% % IEEE Thesaurus: \underline{https://www.ieee.org/publications/services/thesaurus-access-page.com.}
% \end{IEEEkeywords}

\section{Introduction}
\label{sec:introduction}
Autonomous systems have been a focal point in control theory, owing to their broad range of applications, including safety-critical situations such as self-driving cars and unmanned aerial vehicles. A principal challenge in deploying these systems is reaching specific targets while avoiding unsafe regions and respecting state constraints. The study of these reach-avoid-stay (RAS) specifications \cite{Meng1} becomes even more crucial as they often form the foundation for more complex specifications \cite{Kloetzer, NAHS}. This underlines the importance of developing and implementing safe and reliable control strategies. 

The adoption of formal languages \cite{tabuda} to define tasks has led to the increasing popularity of symbolic control. This method abstracts the continuous state space into a finite symbolic model, simplifying the design of controllers with formal guarantees \cite{tabuda}. Enhancements such as the fixed-point algorithm for RAS controller synthesis \cite{FPA_RAS}, the integration of barrier certificates \cite{Local_Global} and incrementally building a directed tree to approximate the product automaton's state space \cite{Sampling_MAS} have improved upon the traditional abstraction-based method. However, as the system becomes more complex or the granularity of the abstraction increases, the symbolic model expands exponentially, leading to greater computational complexity.

Alternatively, control barrier functions (CBFs) \cite{CBF, jagtap2020formal} offer a discretization-free approach to controller synthesis and have been extensively used for safety-critical systems \cite{CBF_SC, CBF_GP}, particularly to meet obstacle avoidance \cite{C3BF}, a class of signal temporal logic (STL) and linear temporal logic (LTL) specifications (excluding avoid specification) \cite{CBF-STL, jagtap2020formal}, and RAS specifications \cite{Meng3}. While CBFs mitigate some of the scalability issues associated with symbolic control, they still rely on an optimization step, making them computationally demanding for high-dimensional systems. Additionally, CBFs require precise knowledge of system dynamics, which is a serious constraint for real-world systems. Another method for achieving RAS objectives under disturbances include Hamilton-Jacobi (HJ) reachability \cite{HJ_RAS}, which provides formal safety guarantees by solving a value function over the state space \cite{HJ_RAS}, but it scales poorly with system dimensionality \cite{HJE, HJ_review} and requires exact knowledge of system dynamics \cite{HJ_RAS2}. While learning-based techniques such as Gaussian Processes (GPs) and Neural Networks (NNs) can be used to estimate unknown dynamics \cite{DD_Safety_Filters}, they introduce their own limitations. GPs become computationally expensive with increasing dataset size, and NNs often lack formal guarantees.

Funnel-based control \cite{PPC1, surveyPPC} addresses these challenges by providing approximation-free closed-form control law that constrains tracking error within exponentially decaying funnel-shaped boundaries. This allows for rapid response to disturbances and changes in system dynamics without the overhead of real-time optimization. This approach has been used in tracking control problems in unknown nonlinear systems \cite{surveyPPC,mishra2023approximation} and multi-agent systems with intricate tasks \cite{Funnel_STL_MAS}. Despite its success in meeting reachability specifications \cite{NAHS, hard_soft}, handling specifications like avoiding unsafe areas remains significantly challenging \cite{Funnel_STL,lindemann2021funnel}. Some studies \cite{RB} combine path-planning algorithms with funnel-based tracking control to achieve RAS. However, separating the trajectory planning from funnel formulation can compromise avoid-constraints under disturbances. To address this \cite{KDF_1, KDF_2} enlarge the obstacles and construct paths in an extended free space. However, these works are restricted to static obstacles, limiting their applicability to general time-varying obstacles.

To solve this, the idea is to eliminate the trajectory generation step and directly design time-varying guidance functions in the state space, which ensures that the target set is reached within the prescribed time while avoiding time-dependent unsafe sets and adhering to state constraints. However, the main challenge lies in devising these time-varying guidance functions through the free state space. The spatiotemporal tube (STT) approach in \cite{STT, das2025control, das2025spatiotemporal, faruqui2025reach}, uses the circumvent function \cite{funnelarxiv} to define smooth time-varying tube functions. The tubes adapt around the unsafe regions, providing a safe channel for the trajectory to enforce RAS and prescribed-time RAS tasks. However, it assumes unsafe sets are unions of convex sets and handle only control-affine dynamics. Moreover, the abrupt STT adjustments, due to the circumvent function, significantly increase the control effort.

In this paper, we propose a sampling-based framework, inspired from \cite{DDC, MCBC, Lipschitz}, to design STTs that address temporal-reach avoid stay specifications (T-RAS), i.e., given an initial set, the system trajectory should reach a target set within desired time interval while avoiding any time-varying unsafe sets. Given a T-RAS specification, we first frame the conditions of STTs as a robust optimization problem (ROP). We proceed by sampling points in time and the unsafe set to establish a scenario optimization program (SOP) associated with the ROP. By solving the SOP, we construct the STTs that adhere to the T-RAS specification with a formal correctness guarantee. Finally, we derive a closed-form model-free control law that ensures the output of a general higher-order unknown MIMO system remains within these synthesized tubes, ensuring robust satisfaction of the T-RAS specification under bounded disturbances. We demonstrate the proposed approach on multiple case studies and compare it against HJ and CBF-based methods, highlighting its scalability, formal safety guarantees, and effectiveness in handling complex, high-dimensional systems.

%%%%%%%%%%%%%%%%%%%%%%%%%%%%%%%%%%%%%%%%%%%%%%%%%%%%%%%%%%%%%%%%%%%%%%%%%%%%%%%%
%%--------------------------------NEW SECTION---------------------------------%%
%%%%%%%%%%%%%%%%%%%%%%%%%%%%%%%%%%%%%%%%%%%%%%%%%%%%%%%%%%%%%%%%%%%%%%%%%%%%%%%%

\section{Preliminaries and Problem Formulation}
\subsection{Notation}

The symbols $\N$, $ \R$, $\R^+$, and $\R_0^+ $ denote the set of natural, real, positive real, and nonnegative real numbers, respectively. 
A vector space of real matrices with $ n $ rows and $ m $ columns is denoted by  $ \R^{n\times m} $. A column vector with $n$ rows is represented by $ \R^{n}$.
The Euclidean norm is represented using $\lVert \cdot \rVert$. For $a, b \in \mathbb{N}$ with $a \leq b$, the closed interval in $\mathbb{N}$ is denoted as $[a; b]$. 
%For $a, b \in \mathbb{R}$ with $a < b$, the open interval in $\mathbb{R}$ is denoted as $(a, b)$. 
A vector $x \in \mathbb{R}^{n}$ with entries $x_1, \ldots, x_n$ is represented as $[x_1, \ldots, x_n]^\top$, where $x_i \in \mathbb{R}$ denotes the $i$-th element of vector $x\in\mathbb{R}^n$ and $i \in [1;n]$.
%We use $I_n$ and $0_{n\times m}$ to denote identity matrix in $\R^{n\times n}$ and zero matrix in $\R^{n\times m}$, respectively. 
A diagonal matrix in $\R^{n\times n}$ with diagonal entries $d_1,\ldots, d_n$ is denoted by $\textsf{diag}(d_1,\ldots, d_n)$.
Given a matrix $M\in\R^{n\times m}$, $M^\top$ represents transpose of matrix $M$. %Given a matrix $P\in\R^{n\times n}$, $\Tr(P)$ represents trace of matrix $P$. 
The power set of a set \textbf{A} is defined as $\mathcal{P}$(\textbf{A}).
%Given a set $A$, we use $|A|$ to represent the cardinality of the set $A$.
%An empty set is denoted by $\emptyset$. The space of bounded continuous functions is represented by $\Ce$.
{Given $N \in \N$ sets $\textbf{A}_i$, $i\in\left[1;N\right]$, we denote the Cartesian product of the sets by $\textbf{A}=\prod_{i\in\left[1;N\right]}\textbf{A}_i:=\{(x_1,\ldots,x_N)|x_i\in \textbf{A}_i,i\in\left[1;N\right]\}$.
The projection of a set $\textbf{A} \subset \mathbb{R}^n$ onto the $i$-th dimension, where $i \in [1; n]$, is represented by an interval $[\textbf{A}_{i,L}, \textbf{A}_{i,U}] \subset \mathbb{R}$, where:
$\textbf{A}_{i,L} := \min\{x_i \in \mathbb{R} \mid [x_1, \ldots, x_n] \in \textbf{A}\}$,
$\textbf{A}_{i,U} := \max\{x_i \in \mathbb{R} \mid [x_1, \ldots, x_n] \in \textbf{A}\}$.}
% For two vectors $v_1, v_2 \in \R^n$, the augmented vector $v$ is defined as $v = (v_1 | v_2) = [v_{1,1} \ldots v_{1,n}, v_{2,1} \ldots v_{2,n}]^\top $.
% We further define the hyper-rectangle $\llbracket \X_a \rrbracket$ as the Cartesian product of intervals for each dimension: $\prod_{i=[1;n]}{[\underline{\X}_{ai}, \overline{\X}_{ai}]}$.
%Given a compact set $\X$, $int(\X)$ represents the interior of the set 
%and $\partial \X = \X \setminus int(\X)$ represents the boundary of $\X$.
%$\overline{\max}$ and $\overline{\min}$ are smooth approximations of the non-smooth $\max$ and $\min$ functions, defined as, $\overline{\max}(a,b) \approx \frac{1}{\nu}\ln(\e^{\nu a}+\e^{\nu b})$ and $\overline{\min}(a,b) \approx -\frac{1}{\nu}\ln(\e^{-\nu a}+\e^{-\nu b})$, respectively. %The sign function $\sign(x) := -1$ if $x < 0$, and $\sign(x) := 1$ if $x \geq 0$.
% $
% \sign(x) :=
% \begin{cases}
%   -1 & \text{if } x < 0 \\
%   1  & \text{if } x \geq 0
% \end{cases}
% $.

\subsection{System Definition}
Consider a class of control-affine MIMO nonlinear pure-feedback systems characterized by the following dynamics:
\begin{align} \label{eqn:sysdyn}
    %\mathcal{S}: \dot{x}^{(1)} &= f_1(x_1) + g_1(x_1)x_2 + w_1 \notag\\
    %\dot{x}^{(2)} &= f_2(x_1,x_2) + -g_2(x_1,x_2)x^{(3)} + w_2 ... \notag\\
    &\dot{x}_i(t) = f_i(z_i(t)) + g_i(z_i(t))x_{i+1}(t) + w_i(t), i\in [1;N-1], \notag\\
    &\dot{x}_{N}(t) = f_N(z_N(t)) + g_N(z_N(t))u(t) + w_N(t), \\
    &y(t) = x_1(t), \nonumber
\end{align}
where for $t\in\R^+_0$ and $i\in[1;N]$,
\begin{itemize}
    \item $x_i(t) = [x_{i,1}(t), \ldots, x_{i,n}(t)]^\top \in {\X}_i \subset \mathbb{R}^{n}$ is the state vector,
    \item $z_i(t) := [x_1^\top(t),x_2^\top(t),...,x_i^\top(t)]^\top \in \overline{\X}_i = \prod_{j=1}^i \X_j \subset \mathbb{R}^{ni} $,
    \item $u(t) \in \mathbb{R}^n$ is control input vector,
    \item $w_i(t) \in \mathbf{W} \subset \R^n$ is unknown bounded external disturbance, and
    \item $y(t) = [x_{1,1}(t), \ldots, x_{1,n}(t)]\in \Y=\X_1$ denotes the output vector.
\end{itemize}
% The state space of the system $\X=\prod_{i\in\left[1;N \right]}{\X}_i$ is a closed and connected set. Further, if $\X$ is of any arbitrary shape, we redefine the state space as $\hat{\X}:= \llbracket \X \rrbracket = \prod_{i \in [1;N]}\prod_{j \in [1;n]} [\underline{\X}_{ij}, \overline{\X}_{ij}]$ and expand the unsafe set as $\hat{\U} = \U \cup \left(\llbracket \X \rrbracket \setminus \X\right)$.

%$w(t) \in \mathbb{W} \subset \mathbb{R}^n$ is the additive noise of a nonlinear system, where $\mathbb{W}$ is a compact set. 

The functions $f_i: \overline{\X}_i \rightarrow \mathbb{R}^n$ , $g_i: \overline{\X}_i \rightarrow \mathbb{R}^{n \times n}, i \in [1;N]$, follows the Assumptions \ref{assum:lip} and \ref{assum:pd}.

\begin{assumption}\label{assum:lip}
    For all $i \in [1;N]$, functions $f_i$ and $g_i$ are unknown and locally Lipschitz.         
\end{assumption}
\begin{assumption}[\cite{PPC1,RAC}] \label{assum:pd}
    The matrix $g_{i,s}(z_i) := \frac{g_i(z_i)+g_i(z_i)^\top}{2}$ is uniformly sign definite with known signs for all $z_i \in \overline{\X}_i$. Without loss of generality, we assume $g_{i,s}(z_i)$ is positive definite, i.e., there exists a constant $\underline{g_i}\in\mathbb R^+, \forall i \in [1;N]$ such that
    $$0 < \underline{g_i} \leq \lambda_{\min} (g_{i,s}(z_i)), \forall \ z_i \in \overline{\X}_i,$$
    where $\lambda_{\min}(\cdot)$ represents the smallest eigenvalue of the matrix.
\end{assumption}

{This assumption ensures that in \eqref{eqn:sysdyn} global controllability is guaranteed, i.e., $g_{i,s}(z_i) \neq 0,$ for all $z_i \in \overline{\X}_i$.}

{
\begin{remark}
    While this paper considers a control-affine structure for the system in \eqref{eqn:sysdyn}, the results can be extended to nonaffine systems. By leveraging techniques from the literature, such as affine transformations \cite{affine_transform, affine_transform_taylor} through coordinate changes or Taylor series expansions, we can include the additional terms that represent the discrepancy between the actual nonaffine dynamics and the affine approximation as an unknown disturbance within the model. This extension would enable a broader application of the proposed approach beyond control-affine dynamics.
\end{remark}
}

\subsection{Problem Formulation}
Let the output of \eqref{eqn:sysdyn}, \textit{i.e.} $y(t)$, be subject to a \textit{temporal reach-avoid-stay specifications} defined next.
\begin{definition}[Temporal reach-avoid-stay (T-RAS) task]\label{def:tras}
Given an output-space $\mathbf{Y}=\X_{1}$, prescribed time $t_c \in \R^+$, a time-varying unsafe set $\U: \R_0^+ \rightarrow \mathcal{P}(\Y)$, an initial set $\So \subset \mathbf{Y} \setminus \U(0)$, and a target set $\T \subset \mathbf{Y} \setminus \U(t_c)$, we say that the output of the system satisfies temporal reach-avoid-stay specifications if $y(0) \in \So$, $y(t_c) \in \T$ and for all $s \in [0,t_c], y(s) \in \mathbf{Y} \setminus \U(s)$.
\end{definition}

\begin{remark}
    $\U(t) = \bigcup_{j=1}^p \Obs^{(j)}(t)$, where $\Obs^{(j)}(t)$ represents time-dependent connected unsafe set. Hence, $\U(t)$ can be disconnected, representing multiple time-varying obstacles. 
    % Let $\Y_{i,L}$ and $\Y_{i,U}$ denote the lower and upper bounds of the $i$-th dimension of $\mathbf{Y}$. 
    Also, if $\mathbf{Y}$ has an arbitrary shape, we redefine the output space as $\hat{\Y} = \prod_{i=1}^{n} [\Y_{i,L}, \Y_{i,U}]$ and expand the unsafe set to $\hat{\U}(t) = \U(t) \cup (\hat{\mathbf{Y}} \setminus \mathbf{Y})$.
\end{remark}

\begin{problem} \label{problem:control}
Given the system in \eqref{eqn:sysdyn}, we aim to design an \textit{approximation-free}, \textit{closed-form} control law $u(t)$ ensuring the output $y(t)$ adheres to the T-RAS specification defined in Definition \ref{def:tras}.
\end{problem}

% \begin{problem}
%     In this paper, we aim to synthesize STTs as defined in Definition \ref{def:stt} using the sampling-based approach to address the T-RAS specification in Definition \ref{def:tras} and overcome the drawbacks mentioned above. Consequently, we design an \textit{approximation-free}, \textit{closed-form} control law $u(t)$ ensuring the output $y(t)$ adheres to the time-varying STT constraints.
% \begin{align} \label{eqn:stt_con}
%     &\gamma_{i,L}(t) < y_i(t) < \gamma_{i,U}(t), \forall i \in [1;n], \nonumber \\
%     &\implies \gamma_L(t) < x_1(t) < \gamma_U(t),
% \end{align}
% and thus complies with the T-RAS specification.
% \end{problem}

% \begin{remark}
%     Let, $\underline{\mathbf{Y}}_i$ and $\overline{\mathbf{Y}}_i$ stand for the lower and upper bound of the i-th dimension of $\mathbf{Y}$. Further, if $\mathbf{Y}$ is of any arbitrary shape, we redefine the output space as $\hat{\mathbf{Y}} = \prod_{i=[1;n]} [\underline{\mathbf{Y}}_i, \overline{\mathbf{Y}}_i]$ and expand the unsafe set as $\hat{\U} = \U \cup (\hat{\mathbf{Y}} \setminus \mathbf{Y})$.
% \end{remark}

To solve the aforementioned problem, we leverage the STTs defined next.
% The author proposed STTs to solve problem \ref{problem:control} for general control-affine systems in \cite{STT}. The tubes are described as follows:
% Let, the initial set is denoted by $\So \subset \X \setminus \U$ while the target set is denoted as $\T \subset \X  \setminus \U$. Note that both the initial and target sets are connected and compact. We consider a temporal reach-avoid-stay problem defined in Definition \ref{def:tras}.
% \begin{definition}[temporal Reach-avoid-Stay Task]\label{def:tras}
%     Given system $\mathcal S$, a state-space $\X$, an unsafe set $\U$, $\So \in \X \setminus \U$ be an initial set, and $\T \in \X \setminus \U$ be a target set. For a given initial state $x(0) \in \X$, there exists $t \in [0,t_c]$, such that, $x(t) \in \T$ and for all $s \in [0,t_c], x(s) \in \X \setminus \U$, where $t_c \in \R^+$ is the temporal.
% \end{definition}

\begin{definition}[Spatiotemporal Tubes for T-RAS task]\label{def:stt}
Given a T-RAS task in Definition \ref{def:tras}, time-varying intervals $[\gamma_{i,L}(t), \gamma_{i,U}(t)]$, where $\gamma_{i,L}:\R_0^+\rightarrow\R$ and $\gamma_{i,U}:\R_0^+\rightarrow\R$ are continuously differentiable functions with $\gamma_{i,L}(t) < \gamma_{i,U}(t)$, are called STTs for T-RAS, if for all $i \in [1;n]$, the following holds:
\begin{subequations}\label{eqn:stt}
\begin{align}
    &\prod_{i=1}^n [\gamma_{i,L}(t), \gamma_{i,U}(t)] \subseteq \Y, \forall t \in [0,t_c], \label{eqn:stta}\\
    &\prod_{i=1}^n [\gamma_{i,L}(0), \gamma_{i,U}(0)] \subseteq \So, \\
    &\prod_{i=1}^n [\gamma_{i,L}(t_c), \gamma_{i,U}(t_c)] \subseteq \T, \\
    &\prod_{i=1}^n 
    [\gamma_{i,L}(t), \gamma_{i,U}(t)] \cap \U(t) = \emptyset, \forall t \in [0,t_c].\label{eqn:sttd}
\end{align}
\end{subequations}
\end{definition}
% Equation \eqref{eqn:stt} describes the feasibility of the tube constraints, given the initial state, the temporal reach specification, and avoid specifications, respectively.

\begin{remark}
    If one designs a control law that constrains the output trajectory within the STTs defined in Definition \ref{def:stt}, i.e.,
    \begin{align} \label{eqn:stt_con}
    \gamma_{i,L}(t) < y_i(t) < \gamma_{i,U}(t), \forall i \in [1;n], 
    \implies \gamma_L(t) < x_1(t) < \gamma_U(t),
\end{align}
then one can ensure the satisfaction of T-RAS specification.
\end{remark}

The STTs approach \cite{STT} increases the conservatism of the specifications by restricting the unsafe set to only a union of convex sets and requiring that the projection of the unsafe set does not overlap with the initial and target set in at least one dimension. {Additionally, the circumvent function—a time-varying function used to capture the unsafe set—introduces sharp changes in the tube shape (as shown in Figure 1 of \cite{STT}), which increases control effort.} To address these limitations and extend the concept of STTs of Definition \ref{def:stt} to more general T-RAS specifications mentioned in Definition \ref{def:tras}, we propose a sampling-based approach in this paper.

% The STT approach for the T-RAS task in \cite{STT} is conservative, as 
%     \begin{itemize}
%         \item An assumption was made that the unsafe set should be a union of convex sets.
%         \item At least one dimension exists where the projection of the unsafe set does not overlap with the initial set and target set for the complete time interval.
%         \item there is a sharp change in the tube shape when the circumvent function is used to avoid the unsafe set, resulting more control effort.
%     \end{itemize}

% \begin{problem}\label{prob1}
%    Given the system $\mathcal{S}$ in \eqref{eqn:sysdyn} satisfying assumptions \ref{assum:lip} and \ref{assum:pd}, design a \textit{closed-form} controller to ensure the satisfaction of temporal reach-avoid-stay specifications as defined in \ref{def:tras}.
% \end{problem}

%%%%%%%%%%%%%%%%%%%%%%%%%%%%%%%%%%%%%%%%%%%%%%%%%%%%%%%%%%%%%%%%%%%%%%%%%%%%%%%%
%%--------------------------------NEW SECTION---------------------------------%%
%%%%%%%%%%%%%%%%%%%%%%%%%%%%%%%%%%%%%%%%%%%%%%%%%%%%%%%%%%%%%%%%%%%%%%%%%%%%%%%%

\section{Sampling-based Spatiotemporal Tubes Construction}\label{data_driven}
% In general, constructing STTs for reach-avoid-stay specifications is conservative as it puts some assumptions on the unsafe set. (The unsafe set needs to be a union of convex sets. Secondly, there exists at least one dimension where the projection of the unsafe set does not overlap with the initial set and target set.) To alleviate this conservatism, we propose a new notion of sampling-based STTs (STT). \PJ{It is also good to highlight the sharp change in tube shape due to adaptive mechanism}\AB{Solved}

In this section, the main goal is to construct the STTs that start from the initial set and reach the target set, avoiding the obstacles denoted by the unsafe sets. 
%To construct the avoid specification for the tubes, we use the data of the obstacle position. 

In our sampling-based setting, we fix the structure of the curves that form the STTs for the $i$-th dimension as, 
% {In $p_{i,\con}^k$, $k$ is not the power. I think it can create confusion!}
$$\gamma_{i,\con}(c_{i,\con},t) = \sum_{k=1}^{z_{i,\con}} c_{i,\con}^k p_{i,\con}^k(t), \ \con \in {L,U}, \ i\in [1;n],$$
where $L$ and $U$ denote the lower and upper constraints, respectively. $p_{i,\con}(t)$ are user-defined nonlinear continuously differentiable basis functions and $c_{i,\con} = [c_{i,\con}^1; c_{i,\con}^2;...; c_{i,\con}^{z_{i,\con}}] \in \mathbb{R}^{z_{i,\con}}$ denote the unknown coefficients.

To satisfy the conditions in Definition \ref{def:stt}, we formulate the following Robust optimization program (ROP): 
\begin{subequations} \label{eq:ROP}
\begin{align}
& \min_{[d_1, d_2,...,d_n,\eta]} \quad \eta  \notag \\
& \textrm{s.t.} \notag \\
& \forall i= [1;n], \gamma_{i,\con}(c_{i,\con},0)\hspace{-0.2em} =\hspace{-0.2em} \hat{\So}_{i,\con}, \ \gamma_{i,\con}(c_{i,\con},t_c) = \hat{\T}_{i,\con};\\
& \forall (t,i) \in [0,t_c]\times[1;n]: \notag \\
& \quad \Y_{i,L} - \gamma_{i,L}(c_{i,L},t)  \leq \eta_i, 
\gamma_{i,U}(c_{i,U},t) - \Y_{i,U}  \leq \eta_i, \\
& \quad \gamma_{i,L}(c_{i,L},t) - \gamma_{i,U}(c_{i,U},t) + \gamma_{i,d} \leq \eta_i, \\
% & \quad \dot{\gamma}_{i,\con}(c_{i,\con},t) - \hat{\mathcal{L}}_{\con} \leq \eta_i;\label{gamma-dot}\\
& \forall (t, \y) \in [0,t_c] \times \U(t), \exists i \in [1;n]: \notag \\
& \quad \min \left \{\y_i - \gamma_{i,L}(c_{i,L},t), \gamma_{i,U}(c_{i,U},t) - \y_i   \right\} \leq \eta_i; \\
& \forall i= [1;n], \eta_i \leq \eta, \\
& d_i = [c_{i,L}, c_{i,U}, \eta_i] \notag.
\end{align}
\end{subequations}
Here, $\con \in \{L,U\}$, $y(0) \in \prod_{i=1}^n [\hat{\So}_{i,L}, \hat{\So}_{i,U}] \subseteq \So$ and $y(t_c) \in \prod_{i=1}^n [\hat{\T}_{i,L}, \hat{\T}_{i,U}] \subseteq \T$. Also, $\gamma_{i,d} \in \R^+$ defines a minimum separation between the tubes.
% {and $\hat{\mathcal{L}}_{\con}\in\R^+$ defines the maximum slope of the STTs with respect to time $t$}. 

One can readily observe that if the solution to the ROP $\eta^* \leq 0$, then it ensures the conditions in Definition \ref{def:stt} will be satisfied.

% \begin{remark}
%     Note that the constraint \eqref{gamma-dot} on the derivative of tube functions is used to limit abrupt changes in the shape of tubes.
% \end{remark}

One faces two major challenges to solve the proposed ROP in (\ref{eq:ROP}). First, the ROP in (\ref{eq:ROP}) has infinitely many constraints since the output space and time are continuous. In addition, though knowledge of the unsafe set is priorly known, constructing the tube avoiding all points in the unsafe set leads to an infinite number of equations. 
To tackle these challenges, we aim to develop a sampling-based scheme for the construction of the curves that form the tube. 
%without solving the ROP in (\ref{eq:ROP}). 

To do so, we consider the augmented unsafe set $\W = [0,t_c] \times \U(t)$. Collect $N_t$ samples $w_r = (t_r, \y_r)$ from $\W$, where $r = [1;N_t]$. Consider a ball $\W_r $ around each sample $w_r$ with radius $\epsilon$, such that for all the points in the augmented space $(t,y) \in \W$, there exists a $w_r$ that satisfies 
\begin{align}\label{eq:ball}
    \lVert (t, \y) - w_r \rVert \leq \epsilon , \forall (t,\y) \in \W.
\end{align}  
This ensures that the union of all these balls forms a superset of the augmented space, i.e., $\bigcup_{r=1}^{N_t} \W_r \supset [0,t_c] \times \U(t)$. 

Now, we construct the Scenario optimization program (SOP) associated with the ROP:
% \vspace{-1 cm}
\begin{subequations} \label{eq:SOP}
\begin{align}
& \min_{[d_1, d_2,...,d_n,\eta]} \quad \eta \notag \\
& \textrm{s.t.}  \notag \\
& \forall i \in [1;n]: \gamma_{i,\con}(c_{i,\con},0)\hspace{-0.2em} =\hspace{-0.2em} \hat{\So}_{i,\con}, \ \gamma_{i,\con}(c_{i,\con},t_c) = \hat{\T}_{i,\con}; \\
& \forall r \in [1;N_t], (t_r,i) \in [0,t_c]\times[1;n]: \notag \\
& \quad \Y_{i,L} - \gamma_{i,L}(c_{i,L},t_r)  \leq \eta_i,  
\gamma_{i,U}(c_{i,U},t_r) - \Y_{i,U}  \leq \eta_i,\\
& \quad \gamma_{i,L}(c_{i,L},t_r) - \gamma_{i,U}(c_{i,U},t_r) + \gamma_{i,d} \leq \eta_i, \\
% & \quad \dot{\gamma}_{i,\con}(c_{i,\con},t_r) - \hat{\mathcal{L}}_{\con} \leq \eta_i;\\
& \forall r \in [1;N_t], (t_r, \y_r) \in \W, \exists i \in [1;n]: \notag \\
& \quad \min \left\{\y_{i,r} - \gamma_{i,L}(c_{i,L},t_r), \gamma_{i,U}(c_{i,U},t_r) - \y_{i,r}\right\} \leq \eta_i; \\
& \forall i= [1;n], \eta_i \leq \eta, \\
& d_i = [c_{i,L}, c_{i,U},\eta_i] \notag.
\end{align}
\end{subequations}

Here $\y_{r} = [\y_{1,r}, \ldots, \y_{n,r}]^\top$. One can readily observe that SOP in \eqref{eq:SOP} has a finite number of constraints of the same form as (\ref{eq:ROP}).

Now, to guarantee that the tubes formed by solving the SOP in \eqref{eq:SOP}, fulfill the constraints of ROP in \eqref{eq:ROP}, we assume the following: 

\begin{assumption} \label{assum:funlip}
    $\gamma_{i,L}(c_{i,L},t)$ and $\gamma_{i,U}(c_{i,U},t)$ are Lipschitz continuous with respect to $t$ with Lipschitz constants $\mathcal{L}_{L}$ and $\mathcal{L}_{U}$ for all $i \in [1;n]$.
\end{assumption}

Under Assumption \ref{assum:funlip}, Theorem \ref{th:constr} outlines a sampling-based methodology for constructing STTs with a certified confidence of 1.

\begin{theorem} \label{th:constr}
    Under Assumption \ref{assum:funlip}, suppose the SOP in (\ref{eq:SOP}) is solved with $N_t$ sampled data as in (\ref{eq:ball}). Let the optimal value of SOP be $\eta_S^*$ with solution $d_i^* = [c_{i,L}^*, c_{i,U}^*, \eta_i^*]$ $\forall i \in [1;n]$. If 
    \begin{equation} \label{eq:satisfy}
        \eta_S^* + \mathcal{L}\epsilon \leq 0,
    \end{equation}
    where $\mathcal{L} = \max\{ \mathcal{L}_{L}, \mathcal{L}_{U}, \mathcal{L}_{L} + \mathcal{L}_{U}, \mathcal{L}_{L}+1, \mathcal{L}_{U}+1\}$, then the STTs functions $\gamma_{i,L}\text{ and } \gamma_{i,U}$, $\forall i \in [1;n]$, obtained from the SOP in \eqref{eq:SOP} ensures the satisfaction of conditions of Definition \ref{def:stt}.
\end{theorem}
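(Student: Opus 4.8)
My plan is to show that, at the coefficients $c_{i,\con}^*$ returned by the SOP in \eqref{eq:SOP}, the left-hand side of every inequality in the ROP \eqref{eq:ROP} is bounded by $\eta_S^*+\mathcal{L}\epsilon$, which by hypothesis is nonpositive. Since it was observed right after \eqref{eq:ROP} that achieving a nonpositive value of $\eta$ in the ROP enforces the conditions of Definition \ref{def:stt}, this is exactly what is required: the curves $\gamma_{i,\con}(c_{i,\con}^*,\cdot)$, $\con\in\{L,U\}$, are then STTs for the T-RAS task.

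I would begin by discarding the boundary equalities at $t=0$ and $t=t_c$, which appear verbatim in both programs and so hold for the SOP optimizer with no further argument. For the three remaining inequality families the argument is the same Lipschitz-plus-covering step: each constraint, read as a function of the sampled variables $(t,\y)$ — or of $t$ alone — is Lipschitz in those variables with constant at most $\mathcal{L}$; since the $\epsilon$-balls $\W_r$ cover $\W=[0,t_c]\times\U(t)$ by \eqref{eq:ball}, every point of the continuum lies within Euclidean distance $\epsilon$ of some sample $w_r$, at which the SOP constraint holds with slack at most $\eta_i^*\le\eta_S^*$; Lipschitz continuity then transports the bound to the continuum point at the price of an extra $\mathcal{L}\epsilon$.

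Next I would identify the constants. The state-space constraints $\Y_{i,L}-\gamma_{i,L}(c_{i,L}^*,t)$ and $\gamma_{i,U}(c_{i,U}^*,t)-\Y_{i,U}$ depend on $t$ only and are $\mathcal{L}_L$- and $\mathcal{L}_U$-Lipschitz by Assumption \ref{assum:funlip}; the separation constraint $\gamma_{i,L}(c_{i,L}^*,t)-\gamma_{i,U}(c_{i,U}^*,t)+\gamma_{i,d}$ is $(\mathcal{L}_L+\mathcal{L}_U)$-Lipschitz in $t$; and in the avoidance constraint the two candidates $\y_i-\gamma_{i,L}(c_{i,L}^*,t)$ and $\gamma_{i,U}(c_{i,U}^*,t)-\y_i$ are affine in $\y_i$ with unit slope while $\mathcal{L}_L$- (resp. $\mathcal{L}_U$-) Lipschitz in $t$, hence jointly $(\mathcal{L}_L+1)$- (resp. $(\mathcal{L}_U+1)$-) Lipschitz in $(t,\y)$, and so is their pointwise minimum (a minimum of Lipschitz functions being Lipschitz with the larger constant). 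Thus $\mathcal{L}=\max\{\mathcal{L}_L,\mathcal{L}_U,\mathcal{L}_L+\mathcal{L}_U,\mathcal{L}_L+1,\mathcal{L}_U+1\}$ dominates all of them. Assembling the pieces: given $(t,\y)\in\W$, choose $w_r=(t_r,\y_r)$ with $\|(t,\y)-w_r\|\le\epsilon$; the SOP provides an index $i$ with $\min\{\y_{i,r}-\gamma_{i,L}(c_{i,L}^*,t_r),\,\gamma_{i,U}(c_{i,U}^*,t_r)-\y_{i,r}\}\le\eta_i^*$, so the same $i$ satisfies $\min\{\y_i-\gamma_{i,L}(c_{i,L}^*,t),\,\gamma_{i,U}(c_{i,U}^*,t)-\y_i\}\le\eta_i^*+\mathcal{L}\epsilon\le\eta_S^*+\mathcal{L}\epsilon\le0$; the $t$-only constraints are handled identically using the time-components of the samples. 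Therefore all ROP inequalities have left-hand side bounded by $\eta_S^*+\mathcal{L}\epsilon\le0$, and Definition \ref{def:stt} follows.

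The Lipschitz bookkeeping is routine; the step that needs real care is the avoidance constraint, where I must check (i) that $\y_i-\gamma_{i,L}(c_{i,L}^*,t)$ and $\gamma_{i,U}(c_{i,U}^*,t)-\y_i$ are Lipschitz \emph{jointly} in $(t,\y)$ with respect to the Euclidean norm used in \eqref{eq:ball} — which comes down to $|\y_i-\y_i'|+\mathcal{L}_L|t-t'|\le(\mathcal{L}_L+1)\|(t,\y)-(t',\y')\|$, and likewise with $\mathcal{L}_U$ — and (ii) that the existentially quantified index $i$ certified at the sample $w_r$ is still the one that works at the nearby continuum point, which holds precisely because the error $\mathcal{L}\epsilon$ is uniform in $i$. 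A final bookkeeping point is that the state-space and separation constraints must be covered for \emph{every} $t\in[0,t_c]$: I would observe that the time-components of an $\epsilon$-net of $\W$ are $\epsilon$-dense in $[0,t_c]$ as long as $\U(t)\neq\emptyset$ throughout, and if not one simply adds a few pure-time samples.
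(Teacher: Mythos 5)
Your proposal is correct and follows essentially the same route as the paper's proof: transport each SOP constraint from the nearest sample to an arbitrary point of the continuum via the triangle inequality, paying a Lipschitz penalty of at most $\mathcal{L}\epsilon$ per constraint family ($\mathcal{L}_L$, $\mathcal{L}_U$, $\mathcal{L}_L+\mathcal{L}_U$, $\mathcal{L}_L+1$, $\mathcal{L}_U+1$ respectively), and conclude from $\eta_S^*+\mathcal{L}\epsilon\le 0$ that every ROP inequality is nonpositive. Your closing remark about the time-components of the samples needing to be $\epsilon$-dense in all of $[0,t_c]$ (which fails if $\U(t)=\emptyset$ on some subinterval, as in the magnetic-levitator example) is a genuine subtlety that the paper's proof passes over silently when it asserts ``$\forall t\in[0,t_c],\ \exists\, t_r$ s.t. $|t-t_r|\le\epsilon$''; your fix of adding pure-time samples is the right one.
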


\begin{proof}\label{proof:constr}
    First, we demonstrate that under condition \eqref{eq:satisfy}, the $\gamma_{i,L}(t)$ and $\gamma_{i,U}(t)$ constructed through solving the SOP in \eqref{eq:SOP} satisfy Equation \eqref{eqn:stta}. The optimal $\eta_S^*$, obtained through solving the \eqref{eq:SOP}, guarantees for any $r \in [1;N_t], (t_r,i) \in [0,t_c] \times [1;n],$ we have: 
    \begin{align*}
    &\Y_{i,L} - \gamma_{i,L}(c_{i,L},t_r) \leq \eta_S^*,\\
    &\gamma_{i,U}(c_{i,L},t_r) - \Y_{i,U} \leq \eta_S^*, \\
    &\gamma_{i,L}(c_{i,L}, t_r) - \gamma_{i,U}(c_{i,U},t_r) +\gamma_{i,d} \leq \eta_S^*,
    \end{align*}
    Now from \eqref{eq:ball}, we infer that $\forall t \in [0,t_c], \exists \hspace{0.2em} t_r $ s.t. $|t-t_r| \leq \epsilon$.
    % and $\forall \y \in U, \exists \hspace{0.2em} \y_r$ s.t. $\forall i \in [1;n], |\y_i(t) - \y_{i,r}(t_r)| \leq \epsilon$. 
    Thus, $\forall i \in [1;n], \forall r \in [1;N_t]$, $\forall t\in [0,t_c]$:
    \begin{itemize}
        \item[(a)] $\Y_{i,L} - \gamma_{i,L}(c_{i,L},t) =\Y_{i,L} - \gamma_{i,L}(c_{i,L},t_r) + \gamma_{i,L}(c_{i,L},t_r) -\gamma_{i,L}(c_{i,L},t)\leq \eta_S^* + \mathcal{L}_{L} |t-t_r| \leq \mathcal{L}\epsilon + \eta_S^* \leq 0$,\\
        \item[(b)] $\gamma_{i,U}(c_{i,U},t) - \Y_{i,U}=\gamma_{i,U}(c_{i,U},t) - \gamma_{i,U}(c_{i,U},t_r) +\gamma_{i,U}(c_{i,U},t_r) - \Y_{i,U}\leq \mathcal{L}_{U} |t-t_r| + \eta_S^* \leq \mathcal{L}\epsilon + \eta_S^* \leq 0$,\\
        \item[(c)] $\gamma_{i,L}(c_{i,L},t) - \gamma_{i,U}(c_{i,U},t) + \gamma_{i,d} \\=\left(\gamma_{i,L}(c_{i,L},t) - \gamma_{i,L}(c_{i,L},t_r)\right) + \big(\gamma_{i,L}(c_{i,L},t_r) -\gamma_{i,U}(c_{i,U},t_r) + \gamma_{i,d}\big) + \left(\gamma_{i,U}(c_{i,U},t_r) - \gamma_{i,U}(c_{i,U},t)\right)\\\leq \mathcal{L}_{L} |t-t_r| +\eta_S^* + \mathcal{L}_{U} |t-t_r| \leq (\mathcal{L}_{L} + \mathcal{L}_{U})\epsilon + \eta_S^* \\ \leq \mathcal{L}\epsilon + \eta_S^* \leq 0$.
    \end{itemize}
    
    Next, we show that when the condition in \eqref{eq:satisfy} is satisfied, the $\gamma_{i,L}(c_{i,L},t)$ and $\gamma_{i,U}(c_{i,U},t)$ formulated by solving the SOP in \eqref{eq:SOP} satisfy argument \eqref{eqn:sttd}.
    The optimal $\eta_i^*$, obtained from solving the SOP in \eqref{eq:SOP}, also ensures that for any $r \in [1;N_t], (t_r, \y_r) \in \W$, 
    $\min \left\{\y_{i,r} - \gamma_{i,L}(c_{i,L},t_r), \gamma_{i,U}(c_{i,U},t_r) - \y_{i,r}\right\} \leq \eta_i^*.$ 
     % Now, solving the SOP in \eqref{eq:SOP}, we get the $\eta_i^*$ such that for any $t_r \in [0,t_c]$ and for any $\y_r , (\y_{i,r}(t_r) - \gamma_{i,L}(c_{i,L},t_r) \leq \eta_i^*)$ Or $ (\gamma_{i,U}(c_{i,U},t_r) - \y_{i,r}(t_r) \leq \eta_i^*)$. 
     Now, $\forall i \in [1;n], \forall r \in [1;N_t]$, $\forall t\in [0,t_c]$, $\forall \y_i\in \U(t)$:
     \begin{align*}
         & \y_i - \gamma_{i,L}(c_{i,L},t) \\ 
         & = \left(\y_i - \y_{i,r}\right) + \left(\y_{i,r} - \gamma_{i,L}(c_{i,L},t_r)\right) + \big(\gamma_{i,L}(c_{i,L},t_r) - \gamma_{i,L}(c_{i,L},t)\big) \leq \epsilon + \eta_i^* +  \mathcal{L}_{L}|t-t_r| \\
         & \leq (\mathcal{L}_{L} + 1)\epsilon + \eta_S^* \leq \mathcal{L} \epsilon + \eta_S^* \leq 0\\
         &\hspace{-1em}\text{Or}\\
         &\gamma_{i,U}(c_{i,U},t) - \y_i \\
         & =  \left(\y_{i,r} - \y_i\right) + \left(\gamma_{i,U}(c_{i,U},t_r) - \y_{i,r}\right) + \big(\gamma_{i,U}(c_{i,U},t) - \gamma_{i,U}(c_{i,U},t_r)\big) \leq \epsilon + \eta_i^* +  \mathcal{L}_{U}|t-t_r| \\
         & \leq (\mathcal{L}_{U} + 1)\epsilon + \eta_S^* \leq \mathcal{L} \epsilon + \eta_S^* \leq 0.
     \end{align*}
     Therefore, if the condition in \eqref{eq:satisfy} is met, the STT constructed with boundaries defined by $\gamma_{i,L}(c_{i,L},t)$ and $\gamma_{i,U}(c_{i,U},t)$, for all $i \in [1;n]$ as determined by solving the SOP in \eqref{eq:SOP} satisfies Definition \ref{def:stt}, thereby completing the proof.
\end{proof}
\begin{remark}
    Note that the Lipschitz constants $\mathcal{L}_L$ and $\mathcal{L}_U$ are required to check condition \eqref{eq:satisfy} in Theorem \ref{th:constr}. We introduce Algorithm \ref{algo:Lipschitz} to estimate these Lipschitz constants in Appendix A, which follows a similar procedure as \cite[Algorithm 1]{MCBC} and \cite[Algorithm 2]{Lipschitz}. 
    %Employing results from \cite{Lipschitz2}, we propose Algorithm \ref{algo:Lipschitz} to calculate $\mathcal{L}_L$ using a finite number of data. Under Algorithm \ref{algo:Lipschitz}, Lemma \ref{lem:Lipschitz}, borrowed from \cite{Lipschitz2}, ensures the convergence of the estimated values to its actual values. Note that, one can estimate $\mathcal{L}_U$ following the same procedure mentioned in Algorithm \ref{algo:Lipschitz}.
\end{remark}

\begin{remark}
    The basis functions $p_{i,\con}^k$ that shape the STTs are highly flexible and can be chosen in various forms, such as monomials in $t$ for a polynomial-type STT. Unlike CBFs, which depend on the multi-dimensional state variable $x$ \cite{MCBC}, STTs rely solely on $t$, significantly reducing computational complexity. Additionally, data-driven CBF synthesis requires sampling from the system's dynamics, whereas the sampling-based STT approach only requires data samples from time and the unsafe set, simplifying the data collection process.
    
    {The computational complexity of the SOP depends on factors like the choice of basis functions and the sampling density. Notably, in complex environments, with a greater number of unsafe sets, higher-degree polynomials provide greater flexibility but also increase the number of decision variables, leading to a polynomial growth in computation time.  Similarly, finer sampling in time and unsafe regions adds more constraints to the SOP, further contributing to a polynomial increase in computational demand. The change in computational complexity with the number of decision variables and constraints is illustrated in Figure \ref{fig:compute}.
    However, it is important to note that the tube computation is performed offline and does not affect the real-time implementation of the STT, which relies on a closed-form control law, discussed in the next section.}
    % \PJ{let's include that figure at least in the Arxiv version and refer to it in the main paper.}
\end{remark}
\begin{figure}[h!]
     \centering
     \begin{subfigure}[b]{0.4\textwidth}
         \centering
         \includegraphics[width=\textwidth]{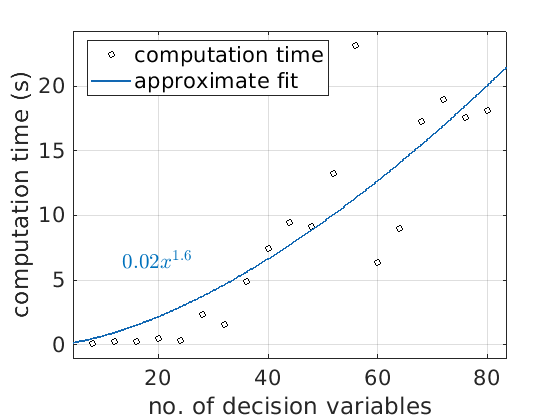}
         \caption{Computation Time vs Decision Variables}
     \end{subfigure}
     \hfill
     \begin{subfigure}[b]{0.4\textwidth}
         \centering
         \includegraphics[width=\textwidth]{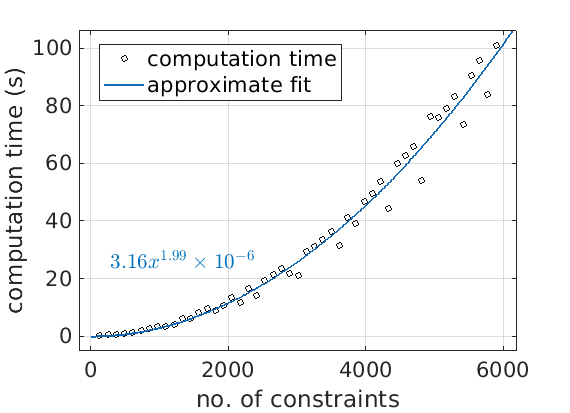}
         \caption{Computation Time vs Constraints}
     \end{subfigure}
        \caption{Time complexity}
        \label{fig:compute}
\end{figure}

%%%%%%%%%%%%%%%%%%%%%%%%%%%%%%%%%%%%%%%%%%%%%%%%%%%%%%%%%%%%%%%%%%%%%%%%%%%%%%%%
%%--------------------------------NEW SECTION---------------------------------%%
%%%%%%%%%%%%%%%%%%%%%%%%%%%%%%%%%%%%%%%%%%%%%%%%%%%%%%%%%%%%%%%%%%%%%%%%%%%%%%%%
\section{Controller Design}\label{sec:control}
In this section, we use the STTs developed through the sampling-based approach outlined in \eqref{eq:SOP} to derive an approximation-free, closed-form control law to constrain the output of system \eqref{eqn:sysdyn} within the tubes. The lower triangular structure of \eqref{eqn:sysdyn} allows us to use a backstepping-like design approach similar to that described in \cite{feedback}. First, we design an intermediate control input $r_2$ for the $x_1$ dynamics to ensure the fulfillment of \eqref{eqn:stt_con}. We then iteratively design the intermediate control laws $r_{k+1}$ for the $x_k$ dynamics, ensuring $x_k$ tracks $r_k$, for all $k \in [2;N]$. It is important to note that $r_{N+1}$ effectively becomes the actual control input $u$ for the system. The steps of the controller design are outlined below.

\textbf{Stage $1$}: Given $\gamma_{1,i,L}(c_{i,L} ,t)$ and $\gamma_{1,i,U}(c_{i,U} ,t)$, $i\in[1;n]$, define the normalized error $e_1(x_1,t)$, the transformed error $\varepsilon_1(x_1,t)$ and the diagonal matrix $\xi_1(x_1,t)$ as
\begin{subequations} \label{eq:stage I}
   \begin{align}
    e_1(x_1,t) &= [e_{1,1}(x_{1,1},t), \ldots, e_{1,n}(x_{1,n},t)]^\top = (\gamma_{1,d} (t))^{-1} \left( 2x_1 - \gamma_{1,s} (t) \right), \\
    \varepsilon_1(x_1,t) &= \big[\ln\left(\frac{1+e_{1,1}(x_{1,1},t)}{1-e_{1,1}(x_{1,1},t)}\right), \ldots, \ln\left(\frac{1+e_{1,n}(x_{1,n},t)}{1-e_{1,n}(x_{1,n},t)}\right) \big]^\top ,\\
    \xi_1(x_1,t) &= \frac{4 (\gamma_{1,d} (t))^{-1}}{1-e_1^\top(x_1,t)e_1(x_1,t)},
    \end{align} 
\end{subequations}
where, $\gamma_{1,s} := [\gamma_{1,1,U} + \gamma_{1,1,L}, \ldots, \gamma_{1,n,U} + \gamma_{1,n,L}]^\top$ and $\gamma_{1,d} := \textsf{diag}(\gamma_{1,1,d},\ldots,\gamma_{1,n,d})$, with $\gamma_{1,i,d} = \gamma_{1,i,U} - \gamma_{1,i,L}$.

The intermediate control input $r_2(x_1,t)$ is given by: 
\begin{equation*}
    r_2(x_1,t) = - \kappa_1\varepsilon_1(x_1,t)\xi_1(x_1,t), \kappa_1 \in \R^+.
\end{equation*}

%%%%%%%%%%%%%%%%%%%%
\textbf{Stage $k$} ($k \in [2;N]$): {Given the reference vector $r_k(z_k,t)$, we aim to design the subsequent intermediate control $r_{k+1}(z_{k+1},t)$ for the dynamics of $x_{k+1}$, ensuring that $x_{k+1}$ tracks the trajectory determined by $r_k(z_k,t)$. Here, $z_k = [x_1,x_2,\ldots, x_{k-1}]^\top$.}

This is done by enforcing exponentially narrowing constraint $\gamma_{k,i}(t) = (p_{k,i} - q_{k,i})e^{-\mu_{k,i}t} + q_{k,i}$, such that 
\begin{align*}
    -\gamma_{k,i}(t) \leq (x_{k,i}-r_{k,i}) \leq \gamma_{k,i}(t) \ \ \forall (t,i) \in \R_0^+ \times [1;n].
\end{align*}
{Here, $\mu_{k,i} \in \R_0^+$, and $p_{k,i}, q_{k,i} \in \R^+$ with $p_{k,i} > q_{k,i}$ and $|x_{k,i}(t=0) - r_{k,i}(t=0)| \leq p_{k,i}$.}

Now define the normalized error $e_k(x_{k},t)$, the transformed error $\varepsilon_k(x_{k},t)$ and the diagonal matrix $\xi_k(x_{k},t)$ as
\begin{subequations} \label{eq:stage k}
    \begin{align}
    e_k(x_{k},t) &= [e_{k,1}(x_{k,1},t), \ldots, e_{k,n}(x_{k,n},t)]^\top = (\gamma_{k,d} (t))^{-1} \left(x_{k} - r_k \right), \\
    \varepsilon_k(x_{k},t) &= \big[\ln\left(\frac{1+e_{k,1}(x_{k,1},t)}{1-e_{k,1}(x_{k,1},t)}\right), \ldots, \ln\left(\frac{1+e_{k,n}(x_{k,n},t)}{1-e_{k,n}(x_{k,n},t)}\right) \big]^\top, \\
    \xi_k(x_{k},t) &= \frac{4 (\gamma_{k,d} (t))^{-1}}{1-e_k^\top(x_{k},t)e_k(x_{k},t)},
\end{align}
\end{subequations}
where $\gamma_{k,d} := \textsf{diag}(\gamma_{k,1,d},\ldots,\gamma_{k,n,d})$, with $\gamma_{k,i,d} = \frac{1}{2} (\gamma_{k,i,U} - \gamma_{k,i,L}) = \gamma_{k,i} \nonumber, \forall i \in [1;n]$.

So, the intermediate control inputs $r_{k+1}(z_{k+1},t)$ to enforce the desired temporal reach-avoid-stay task is given by 
\begin{equation*}
    r_{k+1}(z_{k+1},t) = - \kappa_k\varepsilon_k(x_{k},t)\xi_k(x_{k},t), \kappa_k \in \R^+.
\end{equation*}
Note that, at the $N$-th stage, $r_{N+1}(z_{N+1},t)$ essentially serves as the actual control input $u$, which is given by,
\begin{equation*}
    u(z_{N+1},t) = - \kappa_N\varepsilon_N(x_{N},t)\xi_N(x_{N},t), \kappa_N \in \R^+.
\end{equation*}

Thus, we can design the control input to perform the T-RAS specification for the system described in \eqref{eqn:sysdyn}.

\begin{theorem} \label{theorem_ras}
    Given a nonlinear MIMO system in \eqref{eqn:sysdyn} satisfying assumptions \ref{assum:lip} and \ref{assum:pd}, a temporal reach-avoid-stay (T-RAS) task as defined in Definition \ref{def:tras}, and a spatio-temporal tube as discussed in Section \ref{data_driven}, if the output $y(0)$ is within the STTs at time $t=0$, i.e., $\gamma_{1,i,L}(0) < y_{i}(0) < \gamma_{1,i,U}(0), \forall i \in [1;n]$, then the closed-form control strategies,\begin{subequations}\label{eqn:Control_ras}
     \begin{align}
        r_{k+1}(z_{k+1},t) &= - \kappa_k\varepsilon_k(x_{k},t)\xi_k(x_{k},t), k \in [1;N-1], \\
        u(z_{N+1},t) &= - \kappa_N\varepsilon_N(x_{N},t)\xi_N(x_{N},t),
    \end{align}    
    \end{subequations}
    will ensure the satisfaction of given temporal reach-avoid-stay task where $z_{k+1} = [x_1,x_2,\ldots, x_k]^\top$ and $\varepsilon_k$, $\xi_k$ are shown in \eqref{eq:stage I} and \eqref{eq:stage k}.
\end{theorem}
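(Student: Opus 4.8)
The plan is to run a backstepping-type funnel-control argument on the closed loop formed by \eqref{eqn:sysdyn} and \eqref{eqn:Control_ras}: first show that every normalized error $e_k$ of \eqref{eq:stage I}--\eqref{eq:stage k} stays in the open cube $(-1,1)^n$ on all of $[0,t_c]$, so that in particular $\gamma_{1,i,L}(t)<y_i(t)<\gamma_{1,i,U}(t)$ for every $i$, and then read off the T-RAS conclusion from the STT properties of Definition \ref{def:stt}, which the synthesized tubes enjoy by Theorem \ref{th:constr}. For well-posedness, substitute \eqref{eqn:Control_ras} into \eqref{eqn:sysdyn} and view the result as an ODE for the stacked state $\chi:=(x_1,\dots,x_N)$ on the open set $\mathcal D:=\{(t,\chi)\mid e_k(x_k,t)\in(-1,1)^n,\ k\in[1;N]\}$ (defined recursively, since $r_k$ requires $e_{k-1}\in(-1,1)^n$ to be well defined), on which $\varepsilon_k,\xi_k$ and the virtual laws $r_k$ are smooth; by Assumption \ref{assum:lip}, the smoothness of the logarithmic transformation, and the differentiability of the $\gamma$'s, the right-hand side is locally Lipschitz in $\chi$ on $\mathcal D$. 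The hypothesis $y(0)\in\prod_i(\gamma_{1,i,L}(0),\gamma_{1,i,U}(0))$ gives $e_1(0)\in(-1,1)^n$, and the design conditions on $p_{k,i},q_{k,i}$ in Section \ref{sec:control} give $e_k(0)\in(-1,1)^n$ for $k\ge 2$, so the initial state lies in $\mathcal D$ and there is a unique maximal solution on some $[0,\tau_{\max})$ staying in $\mathcal D$.

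The core of the argument is to show, by induction on the stage $k$, that the transformed error $\varepsilon_k$ remains bounded on $[0,\tau_{\max})$, which confines $e_k$ to a fixed compact subset of $(-1,1)^n$. For stage $1$, take $V_1=\tfrac12\varepsilon_1^\top\varepsilon_1$ and differentiate along the closed loop, writing $x_2=r_2+\gamma_{2,d}e_2$ and substituting $r_2=-\kappa_1\varepsilon_1\xi_1$: this isolates a $g_1$-dependent term that Assumption \ref{assum:pd} renders negative definite in $\varepsilon_1$ — with, in the MIMO case, care in accounting for the diagonal transformation gains — plus a residual built from $f_1$, $g_1\gamma_{2,d}e_2$, $w_1$, $\dot\gamma_{1,s}$ and $\dot\gamma_{1,d}e_1$. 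On $[0,\tau_{\max})$ the output $x_1$ lies in the bounded set $\prod_i(\gamma_{1,i,L}(t),\gamma_{1,i,U}(t))$, so $f_1,g_1$ are bounded there (Assumption \ref{assum:lip}), the $\gamma$'s and their derivatives are bounded on $[0,t_c]$, $|e_k|<1$, and the disturbance is bounded; hence the residual is uniformly bounded and a standard comparison argument yields a bound on $\|\varepsilon_1\|$. Inverting the logarithm, $e_1(t)$ stays in a compact subset of $(-1,1)^n$, which in turn makes $x_1$, $r_2$ and $\dot r_2=\partial_{x_1}r_2\,\dot x_1+\partial_t r_2$ bounded. Feeding this into stage $k$ — where $\gamma_{k,i}$ are the exponentially shrinking funnels and $\dot r_k$ enters as a bounded perturbation inherited from the previous stage — the identical computation with $V_k=\tfrac12\varepsilon_k^\top\varepsilon_k$ propagates the bound up to $\varepsilon_N$, so every $e_k$ stays in a fixed compact subset of $(-1,1)^n$ on $[0,\tau_{\max})$.

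Since $\chi(t)$ then remains in a compact subset of $\mathcal D$ at positive distance from $\partial\mathcal D$ for all $t\in[0,\tau_{\max})$, the standard fact that a maximal solution failing to exist on the whole time axis must leave every compact subset of the domain forces $\tau_{\max}>t_c$. Consequently $e_1(x_1(t),t)\in(-1,1)^n$ for every $t\in[0,t_c]$, i.e. $\gamma_{1,i,L}(t)<y_i(t)<\gamma_{1,i,U}(t)$ for all $i\in[1;n]$ and $t\in[0,t_c]$, which is exactly \eqref{eqn:stt_con}. By Theorem \ref{th:constr} the tube functions satisfy Definition \ref{def:stt}, so \eqref{eqn:stta}--\eqref{eqn:sttd} give $y(t)\in\Y\setminus\U(t)$ for all $t\in[0,t_c]$, while the boundary conditions of Definition \ref{def:stt} give $y(0)\in\So$ and $y(t_c)\in\T$; hence $y$ satisfies the T-RAS task of Definition \ref{def:tras}.

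The step I expect to be the main obstacle is the stagewise boundedness: one has to keep the bookkeeping of the boundedness of $r_k$ and $\dot r_k$ — inherited from the previous stage and resting on the state staying inside the compact tube/funnel sets — completely tight, and one has to exploit the sign-definiteness of Assumption \ref{assum:pd} correctly through the diagonal transformation matrices of the MIMO construction to extract a genuinely dominating negative term in each $\dot V_k$. Once that is in place, the well-posedness and non-escape parts are routine.
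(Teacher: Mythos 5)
Your proposal is correct and follows essentially the same route as the paper: the same stagewise backstepping with Lyapunov functions $V_k=\tfrac12\varepsilon_k^\top\varepsilon_k$, the same use of Assumption \ref{assum:pd} via Rayleigh--Ritz to extract the dominating negative term, the same maximal-solution/non-escape argument, and the same final appeal to Theorem \ref{th:constr} and Definition \ref{def:stt}. If anything, your bookkeeping is slightly more careful than the paper's at the interconnection points (writing $x_{k+1}=r_{k+1}+\gamma_{k+1,d}e_{k+1}$ and absorbing the mismatch into the bounded residual, where the paper substitutes $x_{k+1}=r_{k+1}$ directly), but this is a refinement of the same argument rather than a different one.
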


\begin{proof}
The proof is done for the stages mentioned above. First, we prove the control law for the STT, and then we sequentially prove it for the other stages. 

In each stage, the proof proceeds in three steps.  First, we show that there exists a maximal solution for the normalized error $e_k: [0,\tau_{\max}] \rightarrow \mathbb{D}$, which implies that $e_k(x_{k},t)$ remains within $\mathbb{D}$ in the maximal time solution interval $[0, \tau_{\max})$. Next, we show that the proposed control law in \eqref{eqn:Control_ras} constraints $e_k(x_{k},t)$ to a compact subset of $\mathbb{D}$. Finally, we prove that $\tau_{\max}$ can be extended to $\infty$.\\

\textbf{Stage $1$:}
Differentiating the normalized error $e_1(x_1,t)$ w.r.t time and substituting the first equation of the system dynamics \eqref{eqn:sysdyn} we get,
\begin{align*}
     \dot{e}_1 =& 2(\gamma_{1,d}(t))^{-1} \Big( f_1(x_1) + g_1(x_1)x_2 + w_1  - \frac{1}{2}(\dot{\gamma}_{1,s}(t) + \dot{\gamma}_{1,d}(t) e_1)\Big) :=h_1(t,e_1),
\end{align*}
where $x_1 = \frac{\gamma_{1,d}(t)e_1 + \gamma_{1,s}(t)}{2}$.
We also define the constraints for $e_1$ through the open and bounded set $\mathbb{D}:=(-1, 1)^n$. \\

\textit{\underline{Step $(i)$}}:
Since the initial state $x_1(0)$ satisfies $\gamma_{1,i,L}(0) < x_{1,i}(0) < \gamma_{1,i,U}(0), \forall i \in [1;n]$, the initial normalized error $e_1(0)$ is also within the constrained region $\mathbb{D}$. Further, the STTs are bounded and continuously differentiable functions of time, the functions $f_1(x_1)$ and $g_1(x_1)$ are locally Lipschitz and the control law $r_2(x_1,t)$ is smooth over $\mathbb{D}$. As a consequence, $h_1(t,e_1)$ is bounded and continuously differentiable on $t$ and locally Lipschitz on $e_1$ over $\mathbb{D}$. 

Therefore, according to \cite[Theorem 54]{sontag}, there exists a maximal solution to the initial value problem $\dot{e}_1 = h_1(t,e_1), e_1(0) \in \mathbb{D}$ on the time interval $[0, \tau_{\max})$ such that $e_1(t) \in \mathbb{D}, \forall t \in [0, \tau_{\max}).$\\

\textit{\underline{Step $(ii)$}}:
Consider the following positive definite and radially unbounded Lyapunov function candidate: $V_1 = \frac{1}{2}\varepsilon_1^\top\varepsilon_1$. 

Differentiating $V_1$ with respect to $t$ and substituting $\dot{\varepsilon_1}$, $\dot{e_1}$, system dynamics \eqref{eqn:sysdyn}, and the control strategy (\ref{eqn:Control_ras}), we have:
\begin{align*}
    \dot{V}_1 &= \varepsilon_1^T \dot{\varepsilon}_1= \varepsilon_1^T \frac{2}{1-e_1^Te_1}\dot{e}_1 = \varepsilon_1^T \xi_1 \left(\dot{x}_1 - \frac{1}{2}(\dot{\gamma}_{1,s}+\dot{\gamma}_{1,d}e_1)\right) \\
    &= \varepsilon_1^T \xi_1 \left( f_1(x_1) + g_1(x_1)x_2 +w_1  - \frac{1}{2}(\dot{\gamma}_{1,s}+\dot{\gamma}_{1,d}e_1) \right), \\
    &= \varepsilon_1^T \xi_1 \left( f_1(x_1) -\kappa_1 g_1(x_1) \xi_1 \varepsilon_1 + w_1 -\frac{1}{2}(\dot{\gamma}_{1,s}+\dot{\gamma}_{1,d} e_1) \right).
\end{align*}
Using Rayleigh-Ritz inequality and Assumption \ref{assum:pd},
\begin{align*}
    \underline{g_1}\|\varepsilon_1\|^2 \|\xi_1\|^2 \leq \lambda_{\min}(g_1(x_1))\|\varepsilon_1\|^2 \|\xi_1\|^2  \leq \varepsilon_1^\top \xi_1 g_1(x_1) \xi_1 \varepsilon_1, \\
    -\kappa_1 \varepsilon_1^\top \xi_1 g_1(x_1) \xi_1 \varepsilon_1 \leq -\kappa_1 \underline{g_1}\|\varepsilon_1\|^2 \|\xi_1\|^2 = - \kappa_g^1\|\varepsilon_1\|^2 \|\xi_1\|^2.
\end{align*}
Therefore, 
$\dot{V}_1 \leq -\kappa_g^1\|\varepsilon_1\|^2 \|\xi_1\|^2 + \|\varepsilon_1\| \|\xi_1\| \|\Phi_1\|,$

where $\Phi_1 := f_1(x_1) + w_1 - \frac{1}{2}\dot{\gamma}_{1,s} - \frac{1}{2}\dot{\gamma}_{1,d}e_1$. 
From the construction of $\gamma_{1,s}, \gamma_{1,d}$  we know that $\dot{\gamma}_{1,s}$ and $\dot{\gamma}_{1,d}$ are bounded by construction. From step 1, we have $e_1(t) \in \mathbb{D}$ and consequently $\forall i \in [1;n]$, $x_{1,i}(t) \in (\gamma_{1,i,L}(t), \gamma_{1,i,U}(t))$. Thus, owing to the continuity of $f_1(x_1)$ and employing the extreme value theorem, we can infer $\|f_1(x_1)\| < \infty$. 
Hence, $\|\Phi_1\| < \infty, \forall t \in [0, \tau_{max})$. 

Now add and substract $\kappa_g^1\theta\norm{\varepsilon_1}^2 \|\xi_1\|^2$, where $\theta\in(0,1)$.
\begin{align*}
    &\dot{V}_1 \leq -\kappa_g^1(1-\theta)\norm{\varepsilon_1}^2 \|\xi_1\|^2 - \norm{\varepsilon_1}\|\xi_1\| \left(\kappa_g^1 \theta \norm{\varepsilon_1}\|\xi_1\| - \|\Phi_1\| \right) \\
    &\leq -\kappa_g^1(1-\theta)\norm{\varepsilon_1}^2 \norm{\xi_1}^2, \forall \kappa_g^1 \theta \norm{\varepsilon_1}\|\xi_1\| - \|\Phi_1\| \geq 0 \\
    &\leq -\kappa_g^1(1-\theta)\norm{\varepsilon_1}^2 \|\xi_1\|^2, \forall \norm{\varepsilon_1} \geq \frac{\|\Phi_1\|}{\kappa_g^1 \theta \|\xi_1\|} \nonumber,\forall t \in [0, \tau_{\max}).
\end{align*}
Therefore, we can conclude that there exists a time-independent upper bound $\varepsilon_1^* \in \mathbb{R}_{0}^+$ to the transformed error $\varepsilon_1$, i.e., $\|\varepsilon_1\| \leq \varepsilon_1^*, \forall t \in [0, \tau_{\max})$.

Consequently, taking inverse logarithmic function,
\begin{align*}
    -1 < \frac{e_{1,i}^{-\varepsilon_{1,i}^*}-1}{e_{1,i}^{-\varepsilon_{1,i}^*}+1} =: e_{1,i,L} \leq e_{1,i} \leq e_{1,i,U} := \frac{e_{1,i}^{\varepsilon_{1,i}^*}-1}{e_{1,i}^{\varepsilon_{1,i}^*}+1} < 1,     \forall t \in [0, \tau_{\max}), \ \text{for } i \in [1;n].
\end{align*}
Therefore, by employing the control law (\ref{eqn:Control_ras}), we can constrain $e_1$ to a compact subset of $\mathbb{D}$ as:
\begin{align} \label{eqn:e_compact}
    e_1(t) \in [e_{1,L}, e_{1,U}] =: \mathbb{D}' \subset \mathbb{D}, \forall t \in [0, \tau_{\max}), \nonumber
\end{align}
where $e_{1,L} = [e_{1,1,L}, \ldots, e_{1,n,L}]^\top$ and $e_{1,U} = [e_{1,1,U}, \ldots, e_{1,n,U}]^\top$.\\

\textit{\underline{Step $(iii)$}}:
Finally, we prove $\tau_{\max}$ can be extended to $\infty$. 
We know that $e_1(t) \in \mathbb{D}', \forall t \in [0, \tau_{\max})$, where $\mathbb{D}'$ is a non-empty compact subset of $\mathbb{D}$.
However, if $\tau_{\max} < \infty$ then according to \cite[Proposition C.3.6]{sontag}, $\exists t' \in [0, \tau_{\max})$ such that $e_1(t) \notin \mathbb{D}$. This leads to a contradiction!
Hence, we conclude that $\tau_{\max}$ can be extended to $\infty$.\\

%%%%%%%%%%%%%%%%%%%%%%%%%%%%%%%%%%%%%%%%%%%%%%%%%%%%%%%%%%%%%%%%%%%%%%%%%%%
\textbf{Stage $k$ ($k\in [2;N]$):}

Differentiating the normalized error $e_k(x_{k},t)$ with respect to time and substituting the corresponding equations of the system dynamics \eqref{eqn:sysdyn}, we get
\begin{align*}
    \dot{e_k} = (\gamma_{k,d}(t))^{-1} ( f_k(x_{k}) + g_k(x_{k})x_{k+1} + w_k  - (\dot{r}_k(z_k,t) + \dot{\gamma}_{k,d}(t) e_k)) : =h_k(t,e_k).
\end{align*}
We also define the constraints for $e_k$ through the open and bounded set $\mathbb{D}:=(-1, 1)^n$. \\

\textit{\underline{Step $(i)$}}:
Since the initial state $x_{k}(0)$ satisfies $-\gamma_{k,i}(0) < x_{k,i}(0) < \gamma_{k,i}(0) ,\forall i \in [1;n]$, the initial normalized error $e_k(0)$ is also within the constrained region $\mathbb{D}$. Further, the STTs are bounded and continuously differentiable functions of time, the functions $f_k(x_{k})$ and $g_k(x_{k})$ are locally Lipschitz and the control law $r_{k+1}(z_{k+1},t)$ is smooth over $\mathbb{D}$. As a consequence, $h_k(t,e_k)$ is bounded and continuously differentiable on $t$ and locally Lipschitz on $e_k$ over $\mathbb{D}$. 

Therefore, according to \cite[Theorem 54]{sontag}, there exists a maximal solution to the initial value problem $\dot{e_k} = h_k(t,e_k), e_k(0) \in \mathbb{D}$ on the time interval $[0, \tau_{\max})$ such that $e_k(t) \in \mathbb{D}, \forall t \in [0, \tau_{\max}).$\\

\textit{\underline{Step $(ii)$}}:
Consider the following positive definite and radially unbounded Lyapunov function candidate: $V_k = \frac{1}{2}\varepsilon_k^\top\varepsilon_k$. 

Differentiating $V_k$ with respect to $t$ and substituting $\dot{\varepsilon_k}$, $\dot{e_k}$, system dynamics \eqref{eqn:sysdyn}, and the control strategy (\ref{eqn:Control_ras}), we obtain:
\begin{align*}
    &\dot{V}_k = \varepsilon_k^T \dot{\varepsilon}_k= \varepsilon_k^T \frac{2}{1-e_k^Te_k}\dot{e}_k \\
    & = \frac{1}{2} \varepsilon_k^T \xi_k \left(\dot{x}_k - \dot{r}_k (z_k,t) - \dot{\gamma}_{k,d}e_k\right) \\
    &= \frac{1}{2} \varepsilon_k^T \xi_k \big( f_k(x_{k}) + g_k(x_{k})x_{k+1} +w_k  - \dot{r}_k(z_k,t) - \dot{\gamma}_{k,d}e_k\big) \\
    &= \frac{1}{2} \varepsilon_k^T \xi_k \big( f_k(x_{k}) -\kappa_k g_k(x_{k}) \xi_k \varepsilon_k + w_k - \dot{r}_k(z_k,t) - \dot{\gamma}_{k,d}e_k\big).
\end{align*}
Using Rayleigh-Ritz inequality and Assumption \ref{assum:pd},
\begin{align*}
    \underline{g_k}\|\varepsilon_k\|^2 \|\xi_k\|^2 & \leq \lambda_{\min}(g_k(x_{k}))\|\varepsilon_k\|^2 \|\xi_k\|^2 \leq \varepsilon_k^\top \xi_k g_k(x_{k}) \xi_k \varepsilon_k, \\
    \implies -\frac{1}{2}\kappa_k \varepsilon_k^\top \xi_k g_k(x_{k}) \xi_k \varepsilon_k & \leq -\frac{1}{2} \kappa_k \underline{g_k}\|\varepsilon_k\|^2 \|\xi_k\|^2 = - \kappa_g^k\|\varepsilon_k\|^2 \|\xi_k\|^2.
\end{align*}
Therefore, 
$\dot{V}_k \leq -\kappa_g^k\|\varepsilon_k\|^2 \|\xi_k\|^2 + \|\varepsilon_k\| \|\xi_k\| \|\Phi_k\|,$
where $\Phi_k := \frac{1}{2}\left(f_k(x_{k}) + w_k - \dot{r}_k(z_k,t) - \dot{\gamma}_{k,d}e_k\right)$. 
From the construction of $ \gamma_{k,d}$ we know that $\dot{\gamma}_{k,d}$ is bounded . From step k-a, we have $e_k(t) \in \mathbb{D}$ and consequently $\forall i \in [1;n]$, $x_{k,i}(t) \in (-\gamma_{k,i}(t), \gamma_{k,i}(t))$. Thus, owing to the continuity of $f_k(x_{k})$ and employing the extreme value theorem, we can infer $\|f_k(x_{k})\| < \infty$. Also, since $r_k$ is bounded, we can say that $\dot{r}_k$ is bounded.  
Hence, $\|\Phi_k\| < \infty, \forall t \in [0, \tau_{max})$. 

Now add and substract $\kappa_g^k\theta\norm{\varepsilon_k}^2 \|\xi_k\|^2$, where $\theta\in(0,1)$.
\begin{align*}
    \dot{V_k} &\leq -\kappa_g^k(1-\theta)\norm{\varepsilon_k}^2 \|\xi_k\|^2 - \norm{\varepsilon_k}\|\xi_k\| \left(\kappa_g^k \theta \norm{\varepsilon_k}\|\xi_k\| - \|\Phi_k\| \right) \\
    &\leq -\kappa_g^k(1-\theta)\norm{\varepsilon_k}^2 \norm{\xi_k}^2, \forall \kappa_g^k \theta \norm{\varepsilon_k}\|\xi_k\| - \|\Phi_k\| \geq 0 \\
    &\leq -\kappa_g^k(1-\theta)\norm{\varepsilon_k}^2 \|\xi_k\|^2, \forall \norm{\varepsilon_k} \geq \frac{\|\Phi_k\|}{\kappa_g^k \theta \|\xi_k\|}, \ \forall t \in [0, \tau_{\max}).
\end{align*}
Therefore, we can conclude that there exists a time-independent upper bound $\varepsilon_k^* \in \mathbb{R}_{0}^+$ to the transformed error $\varepsilon_k$, i.e., $\|\varepsilon_k\| \leq \varepsilon_k^*, \forall t \in [0, \tau_{\max})$.

Consequently, taking inverse logarithmic function,
\begin{align*}
    -1 < \frac{e_{k,i}^{-\varepsilon_{k,i}^*}-1}{e_{k,i}^{-\varepsilon_{k,i}^*}+1} =: e_{k,i,L} \leq e_{k,i} \leq e_{k,i,U} := \frac{e_{k,i}^{\varepsilon_{k,i}^*}-1}{e_{k,i}^{\varepsilon_{k,i}^*}+1} < 1, \    \forall t \in [0, \tau_{\max}), \ \text{for } i \in [1;n].
\end{align*}

Therefore, by employing the control law (\ref{eqn:Control_ras}), we can constrain $e_k$ to a compact subset of $\mathbb{D}$ as:
\begin{align*}
    e_k(t) \in [e_{k,L}, e_{k,U}] =: \mathbb{D}' \subset \mathbb{D}, \forall t \in [0, \tau_{\max}), \nonumber
\end{align*}
where, $e_{k,L} = [e_{k,1,L}, \ldots, e_{k,n,L}]^\top$ and $e_{k,U} = [e_{k,1,U}, \ldots, e_{k,n,U}]^\top$.\\

\textit{\underline{Step $(iii)$}}:
Finally, we prove $\tau_{\max}$ can be extended to $\infty$. 
We know that $e_k(t) \in \mathbb{D}', \forall t \in [0, \tau_{\max})$, where $\mathbb{D}'$ is a non-empty compact subset of $\mathbb{D}$.
However, if $\tau_{\max} < \infty$ then according to \cite[Proposition C.3.6]{sontag}, $\exists t' \in [0, \tau_{\max})$ such that $e_k(t) \notin \mathbb{D}$. This leads to a contradiction!
Hence, we conclude that $\tau_{\max}$ can be extended to $\infty$.

%%%%%%%%%%%%%%%%%%%%%%%%%%%%%%%%%%%%%%%%%%%%%%%%%%%%%%%%%%%%%%%%%%%%%%

Thus, the control strategy in \eqref{eqn:Control_ras} solves the T-RAS task as mentioned in Definition \ref{def:tras}, hence completing the proof.
\end{proof}
\begin{remark}
Note that the closed-form time-dependent control law \eqref{eqn:Control_ras} is approximation-free and guarantees the satisfaction of T-RAS specifications for control affine MIMO pure-feedback systems with unknown dynamics.  
\end{remark}

%%%%%%%%%%%%%%%%%%%%%%%%%%%%%%%%%%%%%%%%%%%%%%%%%%%%%%%%%%%%%%%%%%%%%%%%%%%%%%%%
%%--------------------------------NEW SECTION---------------------------------%%
%%%%%%%%%%%%%%%%%%%%%%%%%%%%%%%%%%%%%%%%%%%%%%%%%%%%%%%%%%%%%%%%%%%%%%%%%%%%%%%%

\section{Case Study}

To show the effectiveness of our approach to constructing the STT-based controllers for the temporal reach-avoid-stay tasks, we consider three different case studies: (i) an omnidirectional robot, (ii) a 2-link manipulator, and (iii) a magnetic levitator system. 
%All the computations are performed in a machine with a Ubuntu operating system (Intel i7-7700 CPU with 32 GB RAM).

\subsection{Omnidirectional Robot}

Consider the omnidirectional robot adopted from \cite{NAHS} and defined as 
\begin{align}
    \begin{bmatrix}
        \dot{x}_1 \\ \dot{x}_2 \\ \dot{x}_3
    \end{bmatrix}
    = 
    \begin{bmatrix}
        \cos{x_3} & -\sin{x_3} & 0 \\ \sin{x_3} & \cos{x_3} & 0 \\ 0 & 0 & 1
    \end{bmatrix}
    \begin{bmatrix}
        v_1 \\ v_2 \\ \omega
    \end{bmatrix} + w(t),
\end{align}
where the state vector $[x_1, x_2, x_3]^\top$ captures the robot's pose, $[v_1, v_2, \omega]^\top$ is the input velocity vector in the robot's frame, and $w$ is an external disturbance, with output of the system being $ y = [x_1,x_2]^\top$.

The robot is operating in a 2D environment with obstacles in the presence of unknown bounded external disturbance. The unsafe set covers the start zone (\textit{i.e.,} initial state) and goal (\textit{i.e.,} target state) in such a way that the adaptive tube proposed in \cite{STT} can not be framed from the initial state to the target set, avoiding the unsafe region. 

The starting zone of the robot is $\So = [1,1.5]\times[2,2.5] $ while the target region is $\T = [4.5,5]\times [4.5,5] $. The time bound for reaching the target is considered as $t_c = 5$. The unsafe set (red colored), as shown in Figure \ref{fig:sim1}(a), is assumed to present at the same location for the complete time-horizon $[0,t_c]$. As discussed in Section \ref{data_driven}, we consider the template for the STTs as second-order polynomials in time $t$: {$p_{i,\con}(t) = \{1, t, t^2\}$, for $\con = \{L,U\}$}. The STT obtained to solve the T-RAS task for this case is given by the following curves:
\begin{align*}
    \gamma_{1,L}(c_{1,L},t) &= 1 + 0.2377t + 0.0925t^2,\\
    \gamma_{1,U}(c_{1,U},t) &= 1.5 - 0.0023t + 0.1405t^2,\\
    \gamma_{2,L}(c_{2,L},t) &= 2 - 1.9782t + 0.4956t^2,\\
    \gamma_{2,U}(c_{2,U},t) &= 2.5 - 2.2183t + 0.5437t^2.
\end{align*}

The Lipschitz constants $\mathcal{L}_L$ and $\mathcal{L}_U$ are estimated to be $2.93$ and $3.17$, respectively, so $\mathcal{L} = 6.1$ and $\epsilon$ is considered as $0.0005$. We have used the Z3 SMT solver \cite{z3} to solve the scenario optimization problem \eqref{eq:SOP}. The $\eta_S^*$ upon solving the SOP is $-0.1$, which essentially means $\eta_S^* + \mathcal{L}\epsilon = -0.1 + 6.1\times0.0005 = -0.09695< 0$, that follows Theorem \ref{th:constr}. The computation time to compute the STTs via solving SOP in \eqref{eq:SOP} is 6.58 seconds with 2000 sample points.

The trajectory of the robot under the influence of the proposed control law in \eqref{eqn:Control_ras} as well as the obtained STTs in each dimension are shown in Figure \ref{fig:sim1}(a). 

We consider another case where the starting zone of the robot is $\So = [0,0.5]\times[0,0.5] $ while the target is located at the zone $\T = [4.5,5]\times [4.5,5] $. The time required to reach the target $t_c = 10$. The unsafe set, as shown in Figure \ref{fig:sim1}(b), is present in the same location for the complete time horizon $[0,t_c]$. Here, we consider the template for the STTs as third-order polynomials in time $t$: {$p_{i,\con}(t) = \{1, t, t^2, t^3\}$, for $\con = \{L,U\}$}. The STTs obtained to solve the T-RAS task for this case are given by the following curves:
\begin{align*}
    \gamma_{1,L}(c_{1,L},t) &= 3.9463t - 0.9857t^2 + 0.0636t^3,\\
    \gamma_{1,U}(c_{1,U},t) &= 0.5 + 3.8711t - 0.9928t^2 + 0.0651t^3,\\
    \gamma_{2,L}(c_{2,L},t) &= 0.4283t - 0.0009t^2 + 0.0001t^3,\\
    \gamma_{2,U}(c_{2,U},t) &= 0.5 + 0.1945t + 0.0422t^2 - 0.0017t^3.
\end{align*}

The trajectory of the robot under the influence of the proposed control law in \eqref{eqn:Control_ras} as well as the obtained STTs in each dimension are shown in Figure \ref{fig:sim1}(b).

\begin{figure*}[t]
    \centering
    \includegraphics[width =\textwidth]{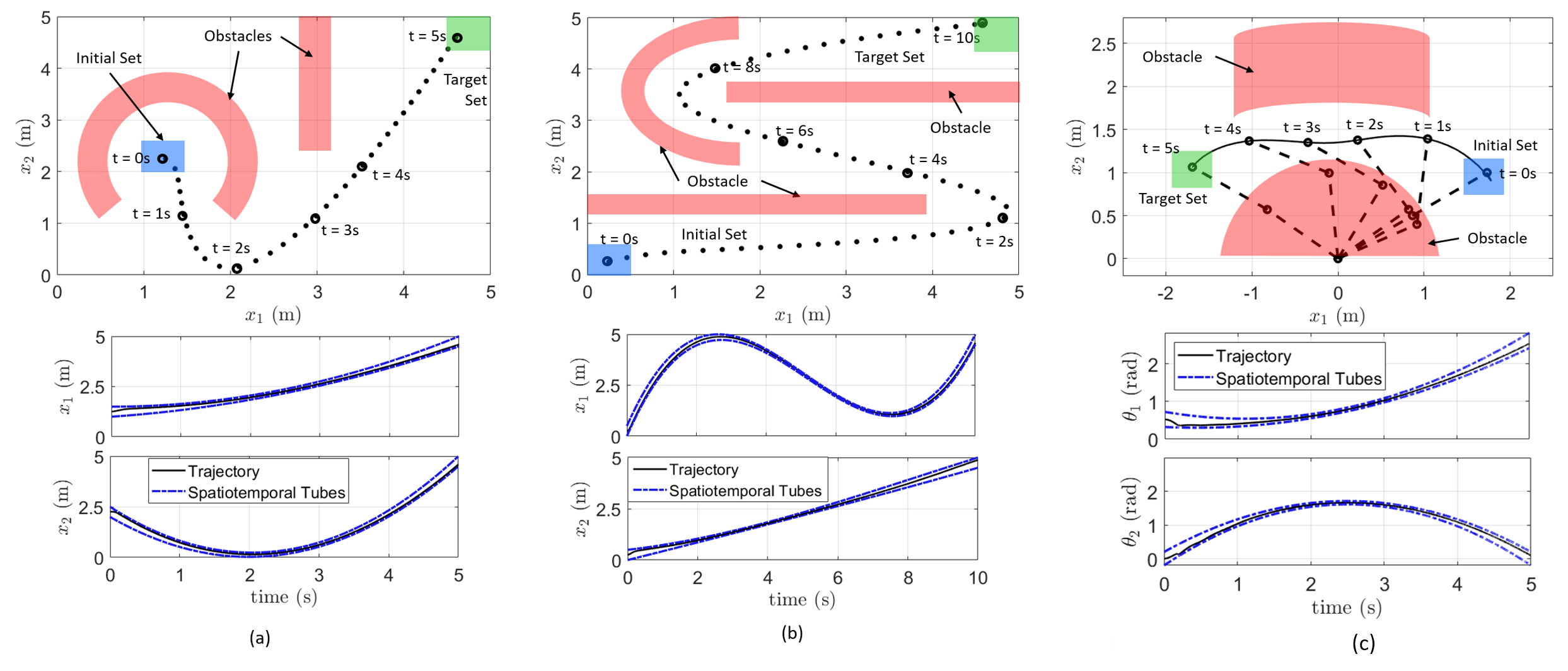}
    \caption{(a) (b) Trajectory of omnidirectional robot navigating 2D environments (top) and respective STTs (bottom) (c) Trajectory of 2R manipulator (top) and respective STTs (bottom).}
    \label{fig:sim1}
\end{figure*}

\begin{figure*}[t]
    \centering
    \includegraphics[width = \textwidth]{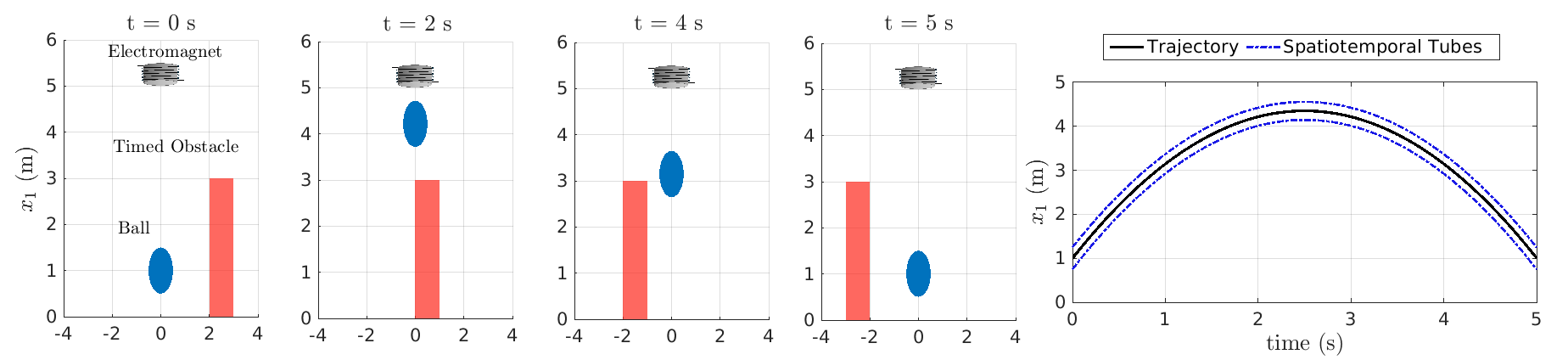}
    \caption{Temporal evolution of magnetic levitation system using the control law in \eqref{eqn:Control_ras} (left) and respective STTs for T-RAS tasks (right).}
    \label{fig:sim3}
\end{figure*}

\begin{table}[h]
\centering
\resizebox{\textwidth}{!}{
\begin{threeparttable}
% \centering
\caption{Comparing STTs with classical algorithms}
\begin{tabular}{lcccccccccccc}
\hline
\textbf{Algorithm} & \multicolumn{2}{c}{\textbf{\begin{tabular}[c]{@{}l@{}}Closed-form \\ Control \end{tabular}}} & \multicolumn{2}{c}{\textbf{\begin{tabular}[c]{@{}l@{}}Formal \\ Guarantee \end{tabular}}} & \multicolumn{2}{c}{\textbf{\begin{tabular}[c]{@{}l@{}}Prescribed-time \\ Reachability \end{tabular}}} & \multicolumn{2}{c}{\textbf{\begin{tabular}[c]{@{}l@{}}Unknown \\ Dynamics \end{tabular}}} & \multicolumn{2}{c}{\textbf{\begin{tabular}[c]{@{}l@{}}Bounded \\ Disturbance \end{tabular}}} & \multicolumn{2}{c}{\textbf{\begin{tabular}[c]{@{}l@{}}Time dependent \\ Obstacle \end{tabular}}} \\ \hline
RRT*\cite{RRTs} & \multicolumn{2}{c}{-\tnote{1}} & \multicolumn{2}{c}{\xmark} & \multicolumn{2}{c}{-\tnote{1}} & \multicolumn{2}{c}{\cmark} & \multicolumn{2}{c}{-\tnote{1}} & \multicolumn{2}{c}{\cmark} \\
MPC\cite{MPC1}\cite{MPC2} & \multicolumn{2}{c}{\cmark} & \multicolumn{2}{c}{\cmark} & \multicolumn{2}{c}{\xmark} & \multicolumn{2}{c}{\xmark} & \multicolumn{2}{c}{\xmark} & \multicolumn{2}{c}{\cmark} \\ 
CBF-based methods \cite{C3BF}\cite{Meng3} & \multicolumn{2}{c}{\xmark} & \multicolumn{2}{c}{\cmark} & \multicolumn{2}{c}{\xmark} & \multicolumn{2}{c}{\xmark} & \multicolumn{2}{c}{\xmark} & \multicolumn{2}{c}{\cmark} \\
{HJ-based reachability}\cite{HJ_review} & \multicolumn{2}{c}{\xmark} & \multicolumn{2}{c}{\cmark} & \multicolumn{2}{c}{\xmark} & \multicolumn{2}{c}{\xmark} & \multicolumn{2}{c}{\cmark} & \multicolumn{2}{c}{\cmark} \\
{NN control}\cite{NN_con_book} & \multicolumn{2}{c}{\xmark} & \multicolumn{2}{c}{\xmark} & \multicolumn{2}{c}{\cmark} & \multicolumn{2}{c}{\cmark} & \multicolumn{2}{c}{\xmark} & \multicolumn{2}{c}{\xmark} \\
Symbolic Control\cite{SCOTS} & \multicolumn{2}{c}{\xmark} & \multicolumn{2}{c}{\cmark} & \multicolumn{2}{c}{\xmark} & \multicolumn{2}{c}{\xmark} & \multicolumn{2}{c}{\xmark} & \multicolumn{2}{c}{\xmark} \\
STT & \multicolumn{2}{c}{\cmark} & \multicolumn{2}{c}{\cmark} & \multicolumn{2}{c}{\cmark} & \multicolumn{2}{c}{\cmark} & \multicolumn{2}{c}{\cmark} & \multicolumn{2}{c}{\cmark} \\
\hline
\end{tabular}\label{tab:comp}
\begin{tablenotes}
    \item[1] Additional mechanisms, are required to ensure control, satisfy reachability within the prescribed time, and handle bounded disturbances. 
    % \item [1] Additional mechanisms like PID and MPC are required for control.
    % \item [2] Additional mechanisms are required to satisfy reachability within the prescribed time.
    % \item [3] Additional mechanisms are required to handle bounded disturbances.
    % \item {Combine A*, RRT*, KD-RRT. Add HJ and Neural Network.}
\end{tablenotes}
\end{threeparttable}
}
\end{table}

\subsection{Euler-Lagrange System}
For a second example, we consider a two-link 2R SCARA manipulator adapted from \cite{spong}. The system operates via two rotating joints, with the joint angles represented by $\theta_1$ and $\theta_2$. We define the system's output at any time as $y(t) = [\theta_1(t), \theta_2(t)]^\top$. Using the STT approach, we ensure that the system trajectory starting from the initial output $\So = [\frac{\pi}{6}, 0]^\top$, reaches the target output $\T = [\frac{5\pi}{6}, 0]^\top$, bypassing the unsafe sets shown in Figure \ref{fig:sim1}(c). The unsafe set is present for the complete time-horizon $[0,t_c]$ with $t_c = 5$. Note that to design the STTs, the obstacle in the workspace has to be mapped to an equivalent unsafe set in the state space. The following describes the model used:
\begin{gather}
    ml^2
    \begin{bmatrix}
        \frac{5}{3} + c_2 & \frac{1}{3} + \frac{1}{2}c_2 \\
        \frac{1}{3} + \frac{1}{2}c_2 & \frac{1}{3}
    \end{bmatrix}
    \begin{bmatrix}
        \Ddot{\theta}_1 \\
        \Ddot{\theta}_2
    \end{bmatrix} 
    +
    ml^2s_2
    \begin{bmatrix}
        -\frac{1}{2}\dot{\theta}_2^2 - \dot{\theta}_1\dot{\theta}_2\\
        \frac{1}{2}\dot{\theta}_2^2
    \end{bmatrix}
    + mgl
    \begin{bmatrix}
        \frac{3}{2}c_1 + \frac{1}{2}c_{12} \\
        \frac{1}{2}c_{12}
    \end{bmatrix}
    = 
    \begin{bmatrix}
        \tau_1(t) \\
        \tau_2(t)
    \end{bmatrix}
    + d(t), \nonumber
\end{gather}
where $m = 1 kg$ and $l = 1 m$ are the mass and length of each link, $g = 9.8m/s^2$ is the acceleration due to gravity, $\tau_1(t), \tau_2(t)$ are the torque inputs at the joints, $d(t)$ is an unknown bounded disturbance, $c_1 = \cos{\theta_1}$, $c_2 = \cos{\theta_2}$, $s_2 = \sin{\theta_2}$, and $c_{12} = \cos{(\theta_1 + \theta_2)}$.

Here, we consider the template for the STTs as second-order polynomials in time $t$: {$p_{i,\con}(t) = \{1, t, t^2\}$, for $\con = \{L,U\}$}. The STTs obtained to solve the T-RAS task for this case are given by the following curves:
\begin{align*}
    \gamma_{1,L}(c_{1,L},t) &= 0.3236 - 0.0893t + 0.1016t^2, \\
    \gamma_{1,U}(c_{1,U},t) &= 0.7236 - 0.3293t + 0.1496t^2,\\
    \gamma_{2,L}(c_{2,L},t) &= -0.2002 + 1.4496t - 0.2899t^2,\\
    \gamma_{2,U}(c_{2,U},t) &= 0.2000 + 1.2097t - 0.2419t^2.
\end{align*}

The Lipschitz constants $\mathcal{L}_L$ and $\mathcal{L}_U$ are estimated to be $1.408$ and $1.215$, respectively, so $\mathcal{L} = 2.623$ and for $\epsilon = 0.00002$, the $\eta_S^*$ upon solving the SOP is $-0.0001$, which essentially means $\eta_S^* + \mathcal{L}\epsilon = -0.0001 + 2.623\times0.00002 = -0.0000475 < 0$, that follows Theorem \ref{th:constr}. The computation time to compute the STTs is 206 seconds with 40500 sample points. 

The simulation results with the proposed control law in \eqref{eqn:Control_ras}, as well as the obtained STTs in each dimension, are depicted in Figure \ref{fig:sim1}(c). 

\subsection{Magnetic Levitator System}
Consider the magnetic levitator system adopted from \cite{Inc_stab} and defined as:
\begin{align*}
    \dot{x}_1 &= \frac{x_2}{M},
    \dot{x}_2 = \frac{x_3}{2\alpha} - Mg, 
    \dot{x}_3 = -\frac{2R}{\alpha}(1-x_1)x_3 + 2\sqrt{x_3}u,
    % x: \begin{pmatrix}
    %     \dot{x_1} = \frac{x_2}{M} \\
    %     \dot{x_2} = \frac{x_3}{2\alpha} - Mg \\
    %     \dot{x_3} = -\frac{2R}{\alpha}(1-x_1)x_3 + 2\sqrt{x_3}u
    % \end{pmatrix}
\end{align*}
where the states $x_1,x_2,x_3$ denote the ball's position, the momentum of the ball and the square of the flux linkage associated with the electromagnet, respectively with output $ y = x_1$. Also, $M=1$ represents the mass of the ball, $g=9.8$ stands for acceleration due to gravity, $R=10$ denotes coil resistance around the electromagnet, $\alpha=0.5$ is a positive constant that depends on the number of coil turns, and $u$ represents the voltage applied to the electromagnet. The system is expected to avoid timed obstacles, keeping initial and target regions the same. The start and target region are considered to be the same and are $\So = \T = [0.75,1.25]$. To simulate the time-dependent obstacle, we assume that the obstacle moves from right to left and it will have a possible collision during the time interval [1.5,3.5]. The time-dependent unsafe set is defined by
\[
\U(t) =
\begin{cases}
[0,3], & \text{if } t \in [1.5,3.5], \\
\phi, & \text{elsewhere, } 
\end{cases}
\] where $\phi$ represents empty set. 

Here, we consider the template for the STTs as second-order polynomials in time $t$: {$p_{i,\con}(t) = \{1, t, t^2\}$, for $\con = \{L,U\}$}. The STT obtained to solve the T-RAS task for this case is given by the following curves:
\begin{align*}
    \gamma_{1,L}(c_{1,L},t) = 0.75 + 2.7167t - 0.5433t^2,\\
    \gamma_{1,U}(c_{1,U},t) = 1.25 + 2.6447t - 0.5289t^2.
\end{align*}
The simulation result with the proposed control law in \eqref{eqn:Control_ras}, as well as the obtained STT, are shown in Figure \ref{fig:sim3}.

\subsection{Drones}
We consider the state of the drone, adapted from \cite{APF}, is given by the following equation:
\begin{align*}
    [\dot{x}_1, \dot{x}_2, \dot{x}_3]^\top &= [v_x, v_y, v_z]^\top + w_1(t), \\
    [\dot{v}_x, \dot{v}_y, \dot{v}_z]^\top &= [u_x, u_y, u_z]^\top + w_2(t),
\end{align*}
% \begin{align}
%     \begin{bmatrix}
%         \dot{x} \\ \dot{y} \\ \dot{z}
%     \end{bmatrix}
%     = 
%     \begin{bmatrix}
%         v_x \\ v_y \\ v_z
%     \end{bmatrix} + w_1(t),
%     \
%     \begin{bmatrix}
%         \dot{v}_x \\ \dot{v}_y \\ \dot{v}_z
%     \end{bmatrix}
%     = 
%     \begin{bmatrix}
%         u_x \\ u_y \\ u_z
%     \end{bmatrix} + w_2(t),
% \end{align}
where $[x_1,x_2,x_3]^\top$ captures the position of the drone, $[v_x,v_y,v_z]^\top$ is the velocity of the drone and $[u_x,u_y,u_z]^\top$ is the control input of the drone.

We consider the drone to be diagonally moving in an arena given by $[0,3]\times[0,3]\times[0,15]$, starting from $\So = [2.75,3]\times [2.75,3] \times [0,0.25]$ reaching towards the target $\T = [0,0.25] \times [0,0.25] \times [0,0.25]$, avoiding the static unsafe set given by $\U = [1,2]\times[0,3]\times[0,3]$ which is assumed to be present there for the complete time-horizon. The time bound for reaching the target is considered as $t_c = 20$.   We consider another dynamic cubic obstacle with a uniform width of $0.25 m$ along each dimension. This obstacle moves along the opposite diagonal of the arena, with the trajectory of its center specified by the following parametric equations:
\begin{gather*}
    x_{1o}(t) = 2.875-0.1375t, \
    x_{2o}(t) = 0.125+0.1375t, \
    x_{3o}(t) = -0.1t^2+2t+0.125.
\end{gather*}
Here, we consider the template for the STTs as second-order polynomials in time $t$: {$p_{i,\con}(t) = \{1, t, t^2\}$, for $\con = \{L,U\}$}. The STT obtained to solve the T-RAS task for this case is given by the following curves:
\begin{align*}
    \gamma_{1,L}(c_{1,L},t) &= 2.75 - 0.0296t - 0.0054t^2,\\
    \gamma_{1,U}(c_{1,U},t) &= 3 - 0.0396t - 0.0049t^2, \\
    \gamma_{2,L}(c_{2,L},t) &= 2.75 - 0.1336t - 0.0002t^2,\\
    \gamma_{2,U}(c_{2,U},t) &= 3 - 0.1436t + 0.0003t^2, \\
    \gamma_{3,L}(c_{3,L},t) &= 0 + 1.9175t - 0.0959t^2,\\
    \gamma_{3,U}(c_{3,U},t) &= 0.25 + 1.9075t - 0.0954t^2.
\end{align*}
The trajectory of the drones under the influence of the proposed control law in \eqref{eqn:Control_ras}, as well as the obtained sampling-based STTs for the drone in each dimension is shown in Figure \ref{fig:drone}. 

\begin{figure*}[h!]
     \centering
     \begin{subfigure}[b]{0.45\textwidth}
         \centering
         \includegraphics[width=0.8\textwidth]{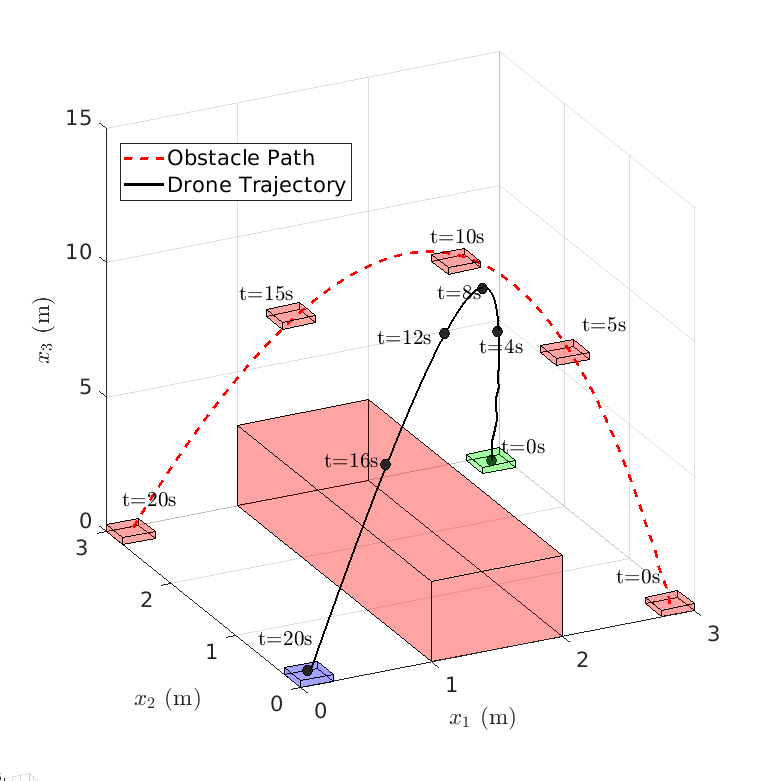}
         \caption{}
     \end{subfigure}
     % \hfill
     \begin{subfigure}[b]{0.45\textwidth}
         \centering
         \includegraphics[width=0.8\textwidth]{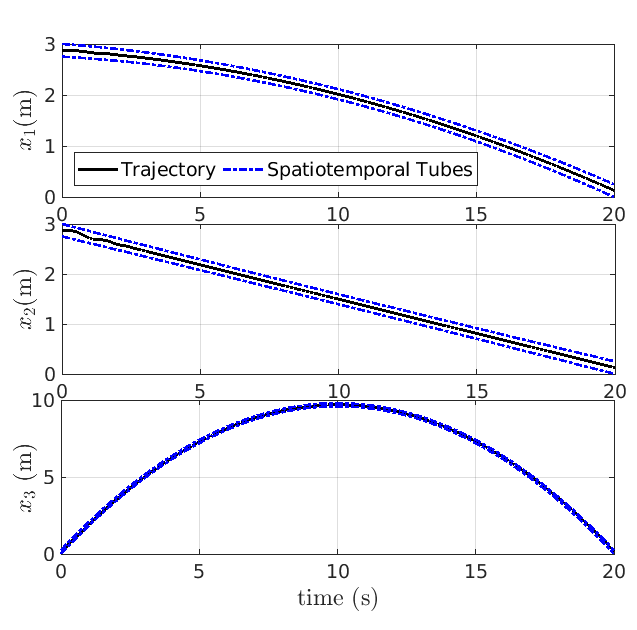}
         \caption{}
     \end{subfigure}
        \caption{Trajectory of drone with static and dynamic obstacles in a 3D environment (left) and respective STTs (right)}
        \label{fig:drone}
\end{figure*}

All the computations were performed on a machine with a Linux Ubuntu operating system with Intel i7-7700 CPU and 32GB RAM. 

\begin{table}[h!]
\centering
\caption{Quantitative Comparison of STT with classical algorithms}
\resizebox{0.72\linewidth}{!}{%
\begin{tabular}{|c|c|c|c|}
\hline
 &  & \begin{tabular}[c]{@{}c@{}}Offline Computation Time (s)\end{tabular} & \begin{tabular}[c]{@{}c@{}}Online Control Synthesis Time (s)\end{tabular} \\ \hline
\multirow{3}{*}{\begin{tabular}[c]{@{}c@{}}Single Integrator \\ (Reach-Avoid)\end{tabular}} & \textbf{STT} & \textbf{13.992} & \textbf{0.021} \\ \cline{2-4} 
 & HJB & 124.289 & 24.696 \\ \cline{2-4} 
 & CBF & - & 3.429 \\ \hline
\multirow{3}{*}{\begin{tabular}[c]{@{}c@{}}Double Integrator \\ (Reach-Avoid)\end{tabular}} & \textbf{STT} & \textbf{13.992} & \textbf{0.029} \\ \cline{2-4} 
 & HJB & 211.792 & 276.355 \\ \cline{2-4} 
 & CBF & - & 5.667 \\ \hline
\multirow{3}{*}{\begin{tabular}[c]{@{}c@{}}1R manipulator \\ (Reach)\end{tabular}} & \textbf{STT} & \textbf{0.233} & \textbf{0.036} \\ \cline{2-4} 
 & HJB & 0.702 & 1.929 \\ \cline{2-4} 
 & CBF & - & 0.537 \\ \hline
\multirow{3}{*}{\begin{tabular}[c]{@{}c@{}}2R Manipulator \\ (Reach)\end{tabular}} & \textbf{STT} & \textbf{0.234} & \textbf{0.062} \\ \cline{2-4} 
 & HJB & 25916.048 & 891.992 \\ \cline{2-4} 
 & CBF & - & 2.367 \\ \hline
\end{tabular}\label{tab:comp2}
}
\end{table}

\begin{figure}[h!]
\centering
\resizebox{0.75\linewidth}{!}{%
\begin{tabular}{cc}
    \includegraphics[width=0.45\textwidth]{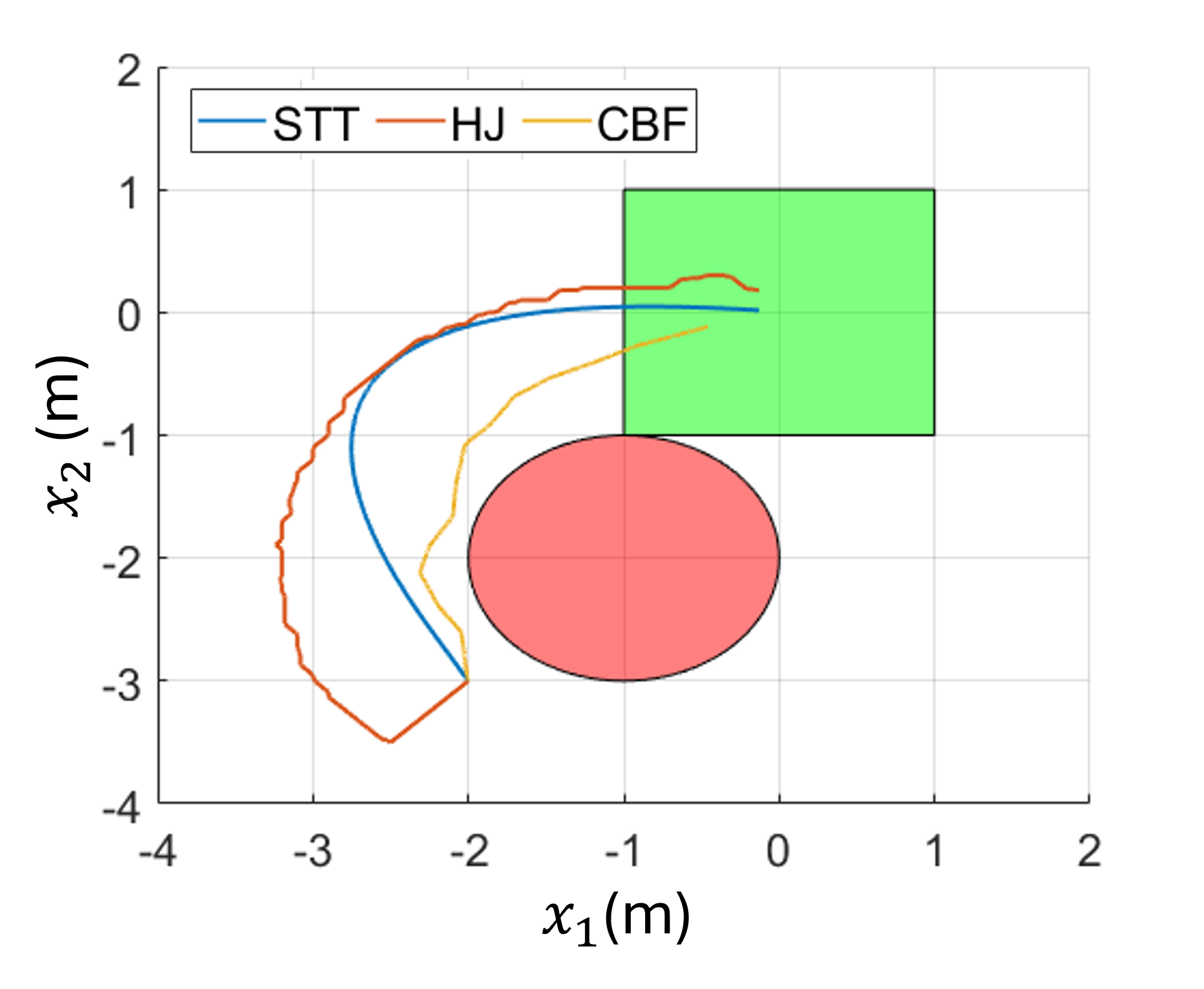} & 
    \includegraphics[width=0.45\textwidth]{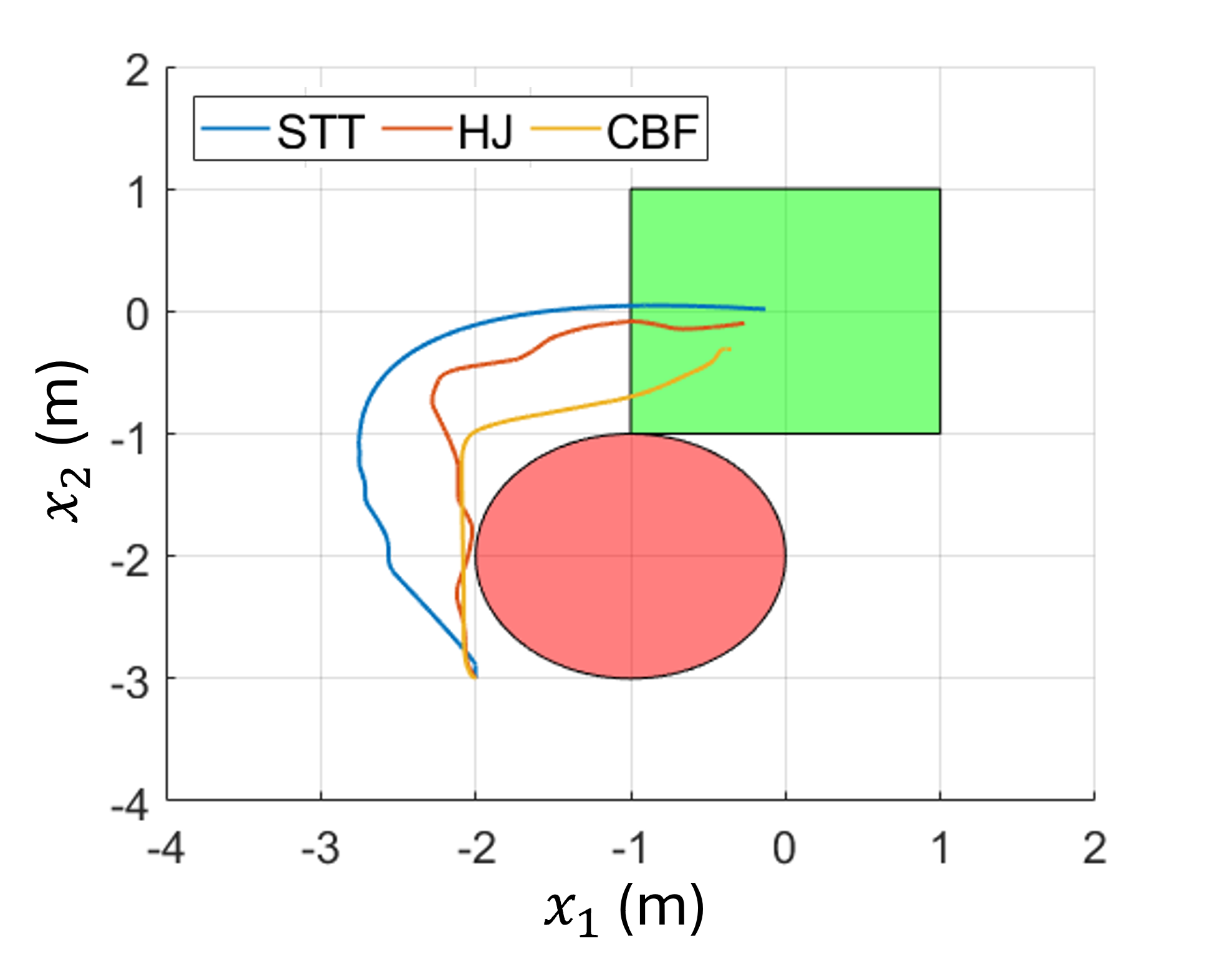} \\
    (a) Single Integrator & (b) Double Integrator \\
    \includegraphics[width=0.45\textwidth]{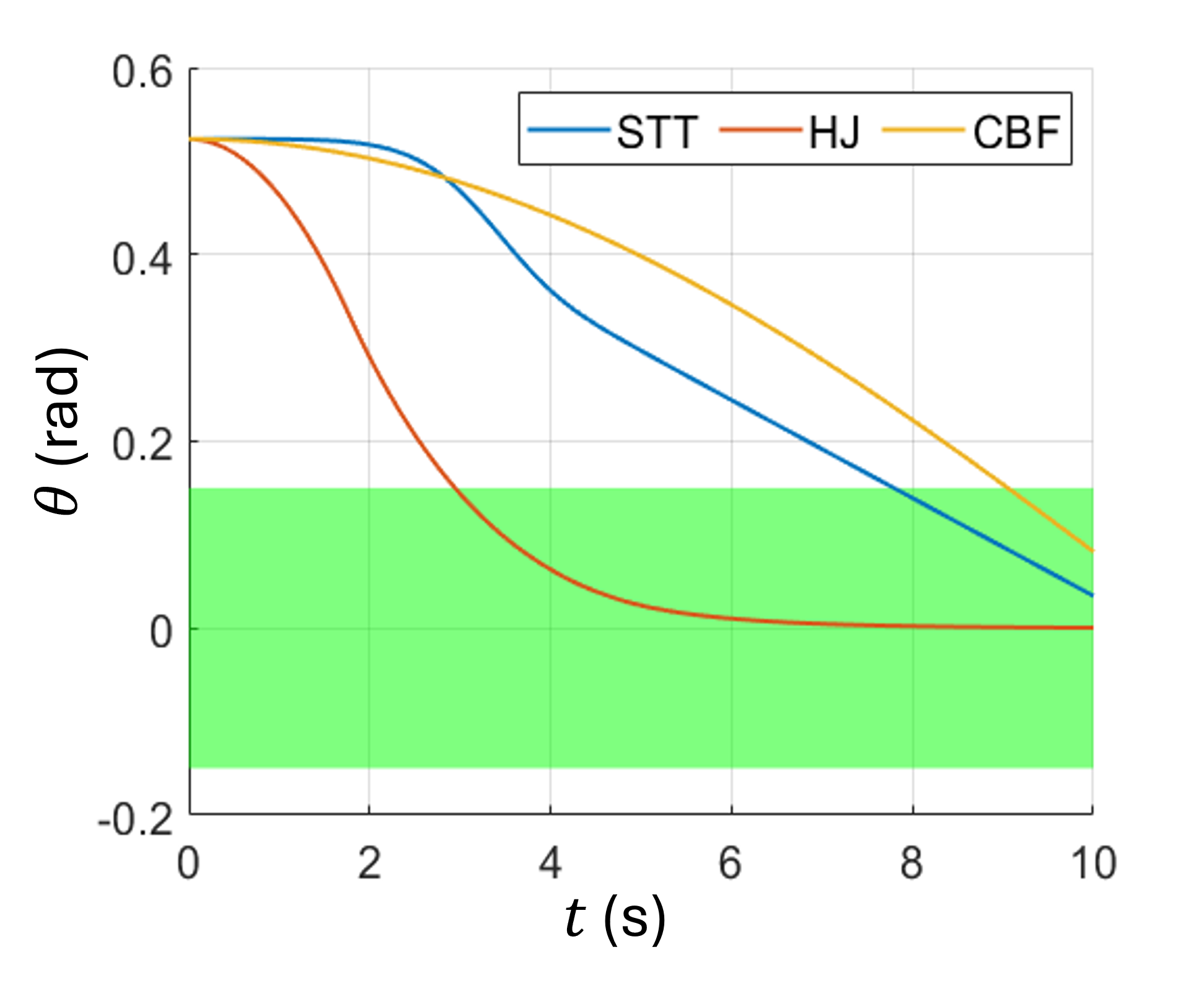} & 
    \includegraphics[width=0.45\textwidth]{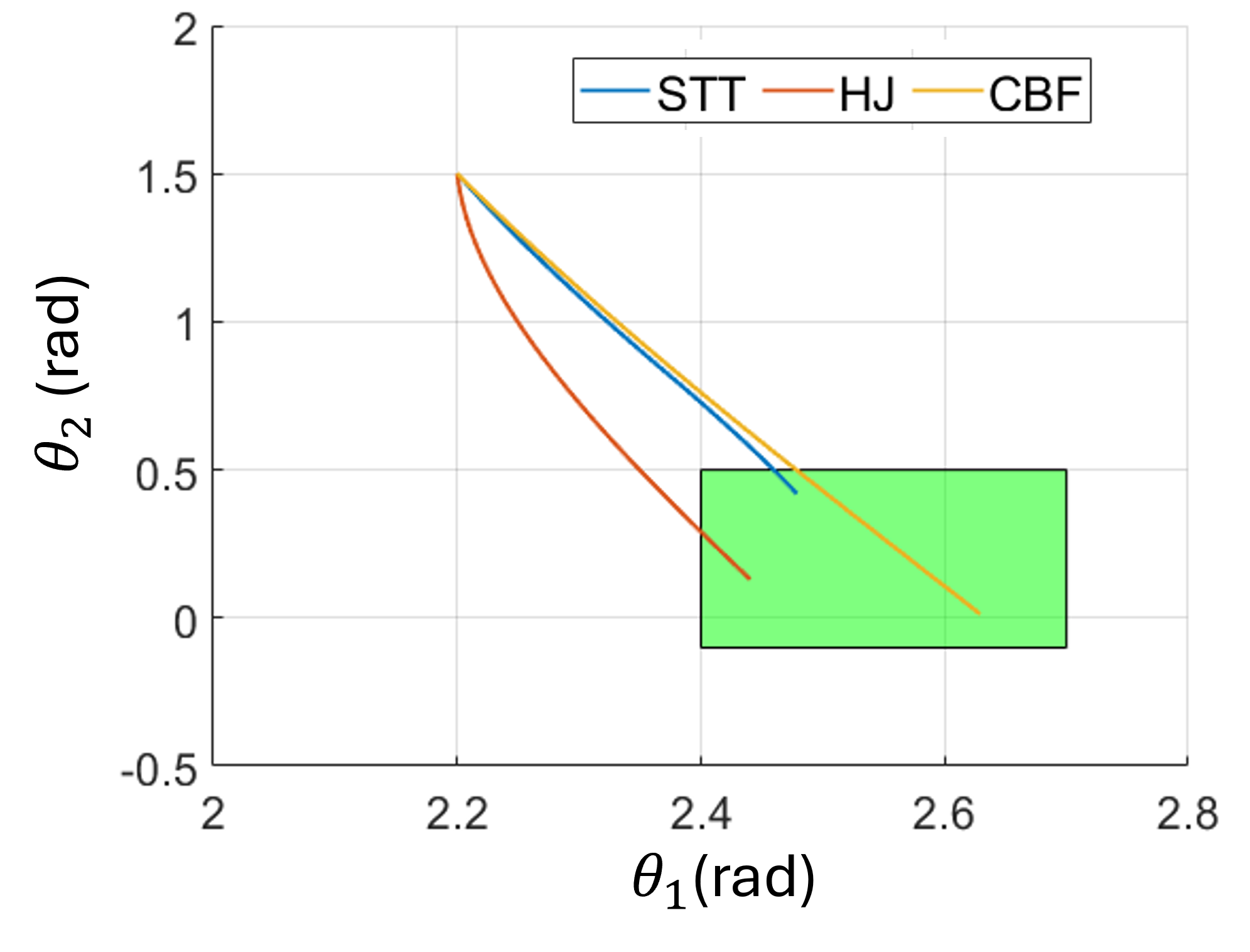} \\
    (c) 1R Manipulator & (d) 2R Manipulator \\
\end{tabular}%
}
\caption{Comparison of STT with CBF and HJB across different systems.}
\label{fig:compare}
\end{figure}

\subsection*{Discussion and Comparison}
Path planning algorithms \cite{qin2023review} offer a potential solution to the case studies, albeit with moderate to high computational complexity and lacking formal guarantees of solution. Conversely, symbolic control techniques \cite{tabuda,SCOTS} promise formal guarantees but come at the expense of increased computational complexity and demanding the knowledge of exact mathematical models. Although it takes moderate time to generate a STT, the sampling-based approach gives a formal guarantee under unknown dynamics to satisfy the T-RAS task. The qualitative comparison of different aspects of the proposed STTs with state-of-the-art algorithms is shown in Table \ref{tab:comp}. Note that it is difficult to handle time-dependent obstacles using state-of-art algorithms. In contrast, STTs can handle them, as shown in the case of the magnetic levitator system.

The efficacy of our approach is also demonstrated across various systems and settings. The omnidirectional mobile robot and 2R manipulator scenarios showcase safe navigation in cluttered two-dimensional workspaces populated with multiple, irregularly shaped unsafe regions. The magnetic levitation system further illustrates the ability of our framework to handle temporally varying unsafe set. Additionally, hardware experiments \href{https://www.youtube.com/watch?v=tigGjCNiwXQ}{Videos} validate the real-time applicability and practical viability of the proposed STT-based control law.
It is also important to note that, the generation of STTs occurs offline, with only the closed-form control law, which is highly computationally efficient, applied in real-time. This design choice significantly enhances scalability, making it suitable for high-dimensional systems.

For quantitative benchmarking, we compared our approach against Hamilton-Jacobi (HJ) \cite{HJ_review} using examples provided in the helperOC toolbox \cite{helperOC} with the help of ToolboxLS \cite{mitchell2007toolbox} and Control Barrier Function (CBF) \cite{CBF} techniques. These include: (i) a 2D single integrator system and (ii) a 2D double integrator system in a reach-avoid task; (iii) a 1R manipulator and (iv) a 2R manipulator, both with reach tasks. Figure \ref{fig:compare} illustrates the simulation results.
The computation times across these approaches are summarized in Table~\ref{tab:comp}. Our method exhibits consistent and significantly lower computational costs compared to HJ and CBF, particularly in higher-dimensional systems. It is important to note that both HJ and CBF methods require knowledge of system dynamics. While learning-based techniques—such as Gaussian Processes (GPs) or Neural Networks (NNs)—could be used to model unknown dynamics for these methods, doing so introduces computational overhead. In contrast, our framework is inherently model-free, scalable, and comes with formal correctness guarantees for satisfying T-RAS tasks under bounded disturbances. 

All the computations were performed on a machine with a Linux Ubuntu operating system with Intel i7-7700 CPU and 32GB RAM. 

\section{Conclusion and Future Work}

The paper focuses on constructing the STTs using a sampling-based approach from the available data of temporal unsafe sets, aiming to ensure the temporal reach-avoid-stay (T-RAS) task. We formulate an SOP to assemble obstacle data and model the tube so that it ensures the T-RAS task is completed with a certified confidence of 1. Consequently, we can design a closed-form, approximation-free control law to retain the unknown MIMO pure-feedback system's trajectory within the tube. We showcase the success of our approach through three different case studies. 

Although the case studies significantly support the proposed approach, the time complexity of constructing the tube for any general case is subject to further investigation as it depends on the length of time horizon, the volume of unsafe region, and the degree of the tube polynomial, hence being tedious to calculate. The proposed approach cannot consider arbitrary input constraints; in future work, we also aim to develop solutions accommodating input constraints. Finally in this work, we focused on fully actuated systems, and in future work, we plan to extend our approach to general underactuated nonlinear systems.

% \PJ{In many references, the initial letters of each word are unnecessarily capitalized! Only proper nouns and acronyms should be capitalized! Exception: For books, the initial letters of the title will be capitalized in IEEE format!}

% \section*{References}
\bibliographystyle{unsrt} % We choose the "plain" reference style
\bibliography{sources} % Entries are in the refs.bib file

% \appendix
\subsection*{Appendix A. Computation of Lipschitz constants $\mathcal{L}_L$ and $\mathcal{L}_U$}
Employing the results of \cite{Lipschitz2}, we propose the Algorithm \ref{algo:Lipschitz} to estimate the Lipschitz constants of the STTs using a finite number of data collected from the tube boundary. Though we introduce the algorithm for the computation of $\mathcal{L}_L$, one can leverage a similar algorithm to estimate $\mathcal{L}_U$, following the same procedure.
\begin{algorithm}
    \caption{Estimation of $\mathcal{L}_L$ using data}
    \label{algo:Lipschitz}
    \begin{algorithmic}[1]
        \State Select two time instances randomly $t_j$, $t_k$ such that $\lVert t_j - t_k\rVert \leq \alpha, \forall j,k \in [1;\overline{N}], \alpha \in \R_{>0}$ 
        \State Calculate $\theta_j^i = \gamma_{i,L}(c_{i,L},t_j) $ and $ \theta_k^i = \gamma_{i,L}(c_{i,L},t_k)$ where $\theta_j^i$ and $\theta_k^i$ denotes the lower bound of the tube in $i$-th dimension at $j$-th and $k$-th time instances 
        \State Compute slope $S_{jk}^i = \frac{\lVert \theta_j^i - \theta_k^i \rVert}{t_j - t_k} , \quad \forall j \neq k$ 
        \State Compute maximum slope as $\Psi_i = \max \{S_{jk}^i|j,k \in [1;\overline{N}], j \neq k\}$ 
        \State Repeat steps 1-4 $M$ times and obtain $\Psi_{i,1}, \ldots, \Psi_{i,M}$ 
        \State Apply Reverse Weibull Distribution \cite{Lipschitz2} to $\Psi_{i,1}, \ldots, \Psi_{i,M}$, which gives us so-called location, scale and shape parameters
        \State The obtained \textit{location parameter} is denoted by $\mathcal{L}_i$
        \State Repeat steps 1-7 to get $\mathcal{L}_1, \ldots, \mathcal{L}_n$ 
        \State $\mathcal{L}_L = \max \{\mathcal{L}_1, \ldots, \mathcal{L}_n\}$
    \end{algorithmic}
\end{algorithm}

 The following Lemma \ref{lem:Lipschitz}, borrowed from \cite{Lipschitz2}, under the proposed algorithm ensures the convergence of the estimated Lipschitz constants to their actual values.

\begin{lemma}[\cite{Lipschitz}]\label{lem:Lipschitz}
    In the verge of Algorithm \ref{algo:Lipschitz}, the estimated Lipschitz constants $\mathcal{L}_L$ and $\mathcal{L}_U$ tends to their actual values if and only if $\alpha$ goes to zero with $\overline{N}, M$ tends to infinity.
\end{lemma}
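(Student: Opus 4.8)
\textbf{Proof proposal for Lemma \ref{lem:Lipschitz}.}

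The plan is to reduce the claim to the known extreme-value-theoretic result underlying the algorithm and verify that the hypotheses of that result are met by the STT boundary functions. The core object is the (finite) sample of maximum slopes $\Psi_{i,1},\ldots,\Psi_{i,M}$ produced in Steps 1--5 of Algorithm \ref{algo:Lipschitz}; each $\Psi_{i,m}$ is a sample maximum of finitely many slope magnitudes $S_{jk}^i = \lVert\theta_j^i-\theta_k^i\rVert/|t_j-t_k|$ of the scalar function $\gamma_{i,L}(c_{i,L},\cdot)$ over pairs of time instants lying within an $\alpha$-window. The first step is to observe that, since $\gamma_{i,L}$ is built from the continuously differentiable basis functions $p_{i,L}^k$, it is in particular $C^1$ on the compact interval $[0,t_c]$, hence its local slope $S_{jk}^i$ is a continuous function of the pair $(t_j,t_k)$ that, as $\alpha\to 0$, converges uniformly to $|\dot\gamma_{i,L}(c_{i,L},t)|$; therefore the true per-dimension Lipschitz constant is $\mathcal{L}_i^{\mathrm{true}} = \max_{t\in[0,t_c]} |\dot\gamma_{i,L}(c_{i,L},t)|$, a finite quantity, and this is the value the estimate must converge to.

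Next I would invoke the extreme value / reverse Weibull machinery of \cite{Lipschitz2}: the distribution of the sample-maximum slope $\Psi_i$ (over $\overline{N}$ draws) has, for fixed $\alpha$, a finite essential supremum equal to the true maximum slope attainable by $\alpha$-separated pairs, call it $\Psi_i(\alpha)$; fitting a Reverse Weibull distribution to the $M$ i.i.d.\ copies $\Psi_{i,1},\ldots,\Psi_{i,M}$ yields a \emph{location parameter} $\mathcal{L}_i$ that is a consistent estimator of this essential supremum $\Psi_i(\alpha)$ as $M\to\infty$ — this is exactly the content of the cited result. Then one lets $\alpha\to 0$: by the uniform-convergence observation above, $\Psi_i(\alpha)\to \mathcal{L}_i^{\mathrm{true}}$. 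Combining, $\mathcal{L}_i \to \mathcal{L}_i^{\mathrm{true}}$ in the iterated limit $\overline{N},M\to\infty$, $\alpha\to 0$; taking the maximum over the finitely many dimensions $i\in[1;n]$ in Step 9 preserves convergence, giving $\mathcal{L}_L \to \max_i \mathcal{L}_i^{\mathrm{true}}$, the true Lipschitz constant of the lower tube. The identical argument applied to $\gamma_{i,U}$ gives the statement for $\mathcal{L}_U$.

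For the ``only if'' direction, I would argue that each hypothesis is necessary: if $\alpha$ stays bounded away from zero, the $\alpha$-separated maximum slope $\Psi_i(\alpha)$ can strictly undershoot $\mathcal{L}_i^{\mathrm{true}}$ (the supremum of $|\dot\gamma_{i,L}|$ may be approached only by pairs arbitrarily close together), so no amount of sampling recovers the true constant; if $\overline N$ does not grow, the sample maximum over $\overline N$ draws fails to populate the tail near $\Psi_i(\alpha)$, biasing the fitted location parameter downward; and if $M$ does not grow, the Reverse Weibull fit is not consistent. Each of these can be made precise with a short counterexample on a simple basis (e.g.\ a quadratic in $t$, whose derivative is monotone so that the slope supremum is attained only at an endpoint pair). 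The main obstacle I anticipate is not the convergence bookkeeping but the careful handling of the iterated limit and the fact that the extreme-value result of \cite{Lipschitz2} is itself a convergence-in-probability (or almost-sure) statement; making the ``tends to'' in the lemma precise — and ensuring the three limiting parameters $\alpha\to0$, $\overline N\to\infty$, $M\to\infty$ are taken in a compatible order — is the delicate point, though since the lemma is quoted verbatim from \cite{Lipschitz2} the cleanest route is to cite that theorem for the $(\overline N,M)$ part and supply only the $C^1$-to-$\alpha\to0$ reduction as the new ingredient.
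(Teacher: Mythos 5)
The paper does not actually prove this lemma: it is imported verbatim from the cited references (\cite{Lipschitz}, \cite{Lipschitz2}) with only the surrounding remark that it ``ensures the convergence of the estimated Lipschitz constants to their actual values.'' So there is no in-paper proof to compare against; what can be judged is whether your reconstruction is a faithful and correct account of the argument those references supply. Your sufficiency direction is essentially the standard Wood--Zhang extreme-value argument and is the right skeleton: the slopes $S_{jk}^i$ are bounded difference quotients of a $C^1$ function on a compact interval, their distribution has a finite right endpoint, the reverse Weibull location parameter is a consistent estimator of that endpoint as $\overline N, M\to\infty$, and the finite maximum over dimensions in Step 9 preserves convergence. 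Your flagging of the iterated-limit bookkeeping as the delicate point is also apt, and citing the source for the $(\overline N, M)$ part is exactly what the paper itself does.

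The genuine gap is in your ``only if'' direction. You claim that if $\alpha$ stays bounded away from zero, the $\alpha$-separated maximum slope $\Psi_i(\alpha)$ ``can strictly undershoot'' $\mathcal{L}_i^{\mathrm{true}}=\max_{t\in[0,t_c]}|\dot\gamma_{i,L}(c_{i,L},t)|$, and you offer a quadratic as a counterexample. This is false for the constraint actually imposed in Step 1 of Algorithm \ref{algo:Lipschitz}, which restricts pairs to be \emph{close} ($\lVert t_j-t_k\rVert\le\alpha$), not far apart: for any fixed $\alpha>0$ and any $C^1$ function, pairs clustered near a maximizer $t^*$ of $|\dot\gamma_{i,L}|$ satisfy the constraint and achieve slopes arbitrarily close to $|\dot\gamma_{i,L}(t^*)|$, so
$\sup_{0<|t_j-t_k|\le\alpha}\lVert\theta_j^i-\theta_k^i\rVert/|t_j-t_k| = \mathcal{L}_i^{\mathrm{true}}$
for \emph{every} $\alpha>0$ (the mean value theorem gives the matching upper bound). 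In your quadratic example $\gamma(t)=at^2$ the slope between $t_j<t_k$ is $a(t_j+t_k)$, which approaches $2at_c$ as both points approach $t_c$ regardless of $\alpha$; the endpoint of the slope distribution is not moved by fixing $\alpha$. The actual role of $\alpha\to 0$ in the cited results is to concentrate the slope distribution near its endpoint so that the extreme-value/reverse-Weibull tail approximation is valid and the location-parameter estimate is accurate, not to shift the endpoint itself. Consequently the ``only if'' clause cannot be established by the endpoint-undershoot argument you propose; it should either be taken as quoted from \cite{Lipschitz2} (as the paper does) or argued via the failure of the tail approximation, which is a different and more delicate statement.
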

\quad Note that, picking minimal value of $\alpha$ and very high value of $\overline{N}, M$ will give a precise approximation of the Lipschitz constants.

\end{document}